\newcommand{\NN}[0]{\mathbb{N}}
\newcommand{\EE}[0]{\mathbb{E}}
\newcommand{\PP}[0]{\mathbb{P}}
\newcommand{\ZZ}[0]{\mathbb{Z}}
\newcommand{\bu}[0]{\mathbf{u}}
\newcommand{\bv}{\mathbf{v}}
\newcommand{\bx}[0]{\mathbf{x}}
\newcommand{\bs}[0]{\mathbf{s}}
\newcommand{\bS}[0]{\mathbf{S}}
\newcommand{\bi}[0]{\mathbf{i}}
\newcommand{\br}[0]{\mathbf{r}}
\newcommand{\bI}[0]{\mathbf{I}}
\newcommand{\bR}[0]{\mathbf{R}}
\newcommand{\Exp}[0]{\text{Exp}}
\newcommand{\limt}[0]{\lim_{t \rightarrow \infty}}
\newcommand{\cF}[0]{\mathcal{F}}
\newcommand{\bX}[0]{\mathbf{X}}
\newcommand{\bY}[0]{\mathbf{Y}}
\newcommand{\KK}[0]{\mathbb{K}}
\newcommand{\cS}[0]{\mathcal{S}}
\newcommand{\cI}[0]{\mathcal{I}}
\newcommand{\cR}[0]{\mathcal{R}}
\newcommand{\bzero}{\mathbf{0}}
\newcommand{\rs}{\mathrm{s}}
\newcommand{\ri}{\mathrm{i}}
\newcommand{\rr}{\mathrm{r}}
\newtheorem{mythm}{Theorem}
\newtheorem{mydef}[mythm]{Definition}
\newtheorem{example}[mythm]{Example}
\journal{Mathematical Biosciences}
\begin{document}

\begin{frontmatter}

\title{An epidemic model for an evolving pathogen with strain-dependant immunity}


\author[mymainaddress,mysecondaryaddress]{Adam Griffin\corref{mycorrespondingauthor}}
\cortext[mycorrespondingauthor]{Corresponding author}
\ead{adagri@ceh.ac.uk}

\author[mymainaddress]{Gareth O. Roberts}
\ead{g.o.roberts@warwick.ac.uk}

\author[mymainaddress]{Simon E.F. Spencer}
\ead{s.e.f.spencer@warwick.ac.uk}

\address[mymainaddress]{Department of Statistics, University of Warwick, Coventry, UK}
\address[mysecondaryaddress]{Centre for Ecology \& Hydrology, Benson Lane, Wallingford, Oxfordshire, OX10 8BB, UK}

\begin{abstract}
Between pandemics, the influenza virus exhibits periods of incremental evolution via a process known as antigenic drift. This process gives rise to a sequence of strains of the pathogen that are continuously replaced by newer strains, preventing a build up of immunity in the host population. In this paper, a parsimonious epidemic model is defined that attempts to capture the dynamics of evolving strains within a host population. The `evolving strains' epidemic model has many properties that lie in-between the Susceptible-Infected-Susceptible and the Susceptible-Infected-Removed epidemic models, due to the fact that individuals can only be infected by each strain once, but remain susceptible to reinfection by newly emerged strains. Coupling results are used to identify key properties, such as the time to extinction. A range of reproduction numbers are explored to characterise the model, including a novel quasi-stationary reproduction number that can be used to describe the re-emergence of the pathogen into a population with `average' levels of strain immunity, analogous to the beginning of the winter peak in influenza. Finally the quasi-stationary distribution of the evolving strains model is explored via simulation.
\end{abstract}

\begin{keyword}
Epidemiology \sep Probabilistic models \sep Quasistationary distributions
\MSC[2010] 92D30 \sep 97M60 \sep 60J28
\end{keyword}

\end{frontmatter}


\section{Introduction}

Epidemic models have become important tools for understanding, predicting and developing mitigation strategies for public health planners dealing with infectious diseases. Recent advances in genetic epidemiology have greatly accelerated our understanding of the complex interactions between host immunity and pathogen evolution, and emphasised the important role that pathogen evolution can have on the dynamics of infection. However, it remains extremely challenging to combine together these two interacting processes within the same mathematical framework \cite{Kucharski2016}. In this paper we develop a parsimonious epidemic model that describes the transmission dynamics of a multi-strain pathogen with evolutionary dynamics similar to the influenza A virus evolving via antigenic drift. 

Multi-strain models have become increasingly popular due to the rise in availability of pathogen genetic analyses. Many models have been based on ordinary differential equations (ODE), despite the fact that stochastic effects play an important role in mutation \citep[see][for a review]{Kucharski2016}. Bichara \emph{et al.} \cite{Bichara2014} develop an epidemic model with competition between finitely many pathogen strains, and include vertical transmission and immunity from maternal antibodies in the infection dynamics. Meehan \emph{et al.} \cite{Meehan2018} analyse multi-strain epidemic models with mutation between strains within an ODE framework. However since their focus is on drug-resistance, they do not consider the effect of immunity. In the multi-strain models discussed in Gog \emph{et al.} \cite{gog2002}, there is assumed to be a finite number of possible strains, and each individual may be infected with one or more of such strains. Evolution was been modelled by a random jump process on a finite strain space using a nearest neighbour jump process. Models involving a countable number of infectious statuses have been discussed in the past \cite{moy1967}, but these typically only use the previously mentioned nearest-neighbour evolution. In \cite{moy1967} this is expressed as a model for parasitic infections where the ``type'' of an individual is defined by the quantity of parasites in a host.  Despite the many modelling papers on multi-strain epidemics, the methodology required to fit these models to data is only just emerging \citep{Touloupou2019}.

Between pandemics, the 4 main sub-types of the influenza virus evolve according to a process called antigenic drift \cite{brown2009}. Antigenic drift arises due to the fact that infection with a particular strain of influenza provides the host with a long-lasting immunity to future infection by the same strain. Once immunity to a particular strain has built up in the population, there is a selection advantage to strains that do not elicit the same immune response. To capture within a mathematical model the complex processes driving the evolution of the influenza virus is extremely challenging due to the interactions between host immunity and viral evolution \cite{bedford2014}. Nonetheless, simple models can give rise to surprisingly complex dynamics \cite{roberts2019,bedford2012}. The H3N2 subtype of influenza A, in particular, exhibits a narrow spread in its evolutionary tree, with all strains a short genetic distance from a single branch \cite{fitch1997,bedford2015}. Each strain persists for a relatively short amount of time before being replaced.   

In this paper we define a novel epidemic model with countably infinite, evolving strains that sits between the traditional susceptible-infected-susceptible (SIS) and susceptible-infected-removed (SIR) epidemic models, in that each individual may be infected many times with the pathogen, but only once by a strain. The model is designed to reflect the linear pattern of evolution observed in pathogens undergoing antigenic drift, such as seasonal influenza. By introducing an equivalence relation on the state space, we are able to describe the equilibrium behaviour of the model prior to elimination of the pathogen. In Section \ref{sec: limit}, coupling arguments are used to make precise the relationship between our new model and the traditional SIS and SIR models and to explore the large population limit. In Section \ref{sec: repro} we discuss three reproduction numbers for the novel model. Finally in Section \ref{sec: sim study} we explore simulations from the quasi-equilibrium distributions.

\section{Definition of the model}\label{sec: models}

\subsection{SIRS with evolving strains (E-SIRS)}\label{subsec: sirs}
Consider a closed population of $N$ individuals which are classified as \emph{susceptible}, \emph{infective} or \emph{removed}. For each time $t$, we denote the number of susceptibles by $\cS(t)$, the number of infectives by $\cI(t)$, and the number of removed individuals by $\cR(t)$. An infective remains in this class for a random period of time known as their infectious period, after which they become removed. Similarly, removed individuals become susceptible again after their immune period, during which they cannot be infected by any strain (even a new one).  We assume the durations of infectious periods are i.i.d. draws from $L_I \sim \Exp(\gamma)$ and immune periods are i.i.d draws from $L_R \sim \Exp(\delta)$.

To capture dynamics of competing and evolving strains, every individual has a strain index $k \in \ZZ$ which denotes the most recent strain with which an individual was (or is currently) infected. We denote the number of susceptibles, infectives and removed individuals respectively with strain index $k$ by $\cS_k(t)$, $\cI_k(t)$ and $\cR_k(t)$. Finally we denote by $K^*(t)$ the largest strain index observed up to time $t$ and use $K^* := K^*(t^-)$ where the time is clear from context.

As in the standard SIRS model \cite{nasell2002}, we assume homogeneous mixing of individuals, and so each pair of individuals makes contact at the points of a Poisson process with rate $\frac{\beta}{N} > 0$. New strains are introduced into the population in the following way. Each time an infective makes contact with a susceptible individual, we assume that with some probability $\theta \in [0,1]$, there is a successful infection of the susceptible with a previously unseen strain, which is given strain index $K^*(t^-)+1$. With probability $1-\theta$, the original strain in the infective attempts to infect the susceptible; the success of this infection depends on the strain index of the susceptible. For simplicity, we assume that immunity is cumulative: a susceptible with strain index $k$ is immune to all strains with index $j \leq k$. Removed and susceptible individuals retain the strain index of the strain they have most recently recovered from. All contact processes, mutation events, infectious periods and immune periods are assumed to be independent from each other.
	
To summarise, the epidemic proceeds according to the following events. 

\begin{itemize}
	\item \emph{Infection without mutation:} 
	
		$(\cS_j(t), \cI_k(t)) \mapsto (\cS_j(t) -1, \cI_k(t) +1)$

	for all $j < k$, with rate $\beta(1-\theta) N^{-1} \cS_j(t)\cI_k(t)$.
	\item \emph{Infection with mutation:}
	
		$(\cS_j(t), \cI_{K^*+1}(t)=0) \mapsto (\cS_j(t)-1, \cI_{K^*+1}(t)=1)$
	for $j \in \ZZ$ with rate $\beta \theta N^{-1} \cS_j(t) \cI(t)$.
	\item \emph{Recovery:}
	
		$(\cI_k(t), \cR_k(t)) \mapsto (\cI_k(t) - 1, \cR_k(t) +1)$
	for $k \in \ZZ$ with rate $\gamma \cI_k(t)$.
	\item \emph{Loss of global immunity:}

		$(\cR_k(t), \cS_k(t)) \mapsto (\cR_k(t) -1, \cS_k(t) + 1)$
	for $k \in \ZZ$ with rate $\delta \cR_k(t)$.
\end{itemize}

The state space of the E-SIRS model is given by $\Omega'= \{ (\bs, \bi, \br): \sum_{k \in \ZZ} (\rs_k + \ri_k + \rr_k) = N \}$, with $\bs, \bi, \br$ being infinite sequences taking values in $\{0,1,\dots,N\}$. A natural initial condition might be $\cI_1(0) = 1, \cI_k(0) = 0$ for $k \neq 1$, $\cS_0(0) = N-1$, $\cS_k(0) = 0$ for $k \neq 0$, and $\cR_k(0)=0$ for all $k$. This equates to a single currently infective individual infected with a strain to which all other individuals are susceptible, and to which no-one is currently recovering.

\subsection{SIS with evolving strains (E-SIS)}\label{subsec: sis}
Consider a second model where
following an infectious period, 
an individual becomes immediately susceptible, corresponding to the E-SIRS model where $\delta = \infty$.
In this model there are no periods of immunity and so recovery events generate susceptibles:

 \begin{itemize}
 	\item \emph{Recovery:}
 	
 		$(\cI_k(t), \cS_{k}(t)) \mapsto (\cI_k(t) - 1, \cS_{k}(t) +1)$
 	for $k \in \ZZ$ with rate $\gamma \cI_k(t)$.
 \end{itemize}

plus infection transitions as above.
The E-SIS model evolves over the subspace $\{ (\bs, \bi): \sum_{k \in \ZZ} (\rs_k + \ri_k) = N  \}$. We will refer to both spaces by $\Omega'$, the meaning will always be clear from context.

\subsection{Link to single-strain models}\label{subsec: no evo}
Consider the E-SIS model with $\theta = 1$. All contacts are mutation contacts and hence successful, and so $(\cS(t),\cI(t))$, the total numbers of susceptibles and infectives, follow a traditional single-strain SIS model as defined in \cite{Andersson2000}. We can also perform a similar identification between $(\cS(t),\cI(t),\cR(t))$, the number of susceptibles, infectives and immune individuals in the E-SIRS model and the single-strain SIRS model as defined in \cite{nasell2002}.

On the other hand, consider the E-SIS model with $\theta = 0$. Since no contacts are mutations, no individual can be infected more than once. If the population starts with strain index $0$ except for the initial infectives with strain $1$, $(\cS_{0}(t), \cI_{1}(t), \cS_{1}(t))$ behaves as a traditional single-strain SIR model as defined in \cite{Andersson2000}. 
%
%

\section{Equivalence relation}\label{sec: equiv}
We wish to study the long-term average behaviour of characteristics such as the levels of immunity and pathogen diversity, however the constant emergence and extinction of strains means that the evolving epidemic process has no steady-state. To counter this we introduce an equivalence relation to fix the process against the most recently emerged strain. 

\begin{mydef}
	The \emph{active strain set} of a state $(\bs, \bi, \br) \in \Omega'$ is given by $\KK = \{k \in \ZZ : \ri_k > 0\}$. Let elements of this set be indexed from 1 to $|\KK|$ in ascending order, so for $k_a, k_b \in \KK$, we have $k_a < k_b$ whenever $a < b$.
\end{mydef}
\begin{mydef}\label{def: equiv}
	Two states $(\bs,\bi,\br)$ and $(\bs', \bi', \br') \in \Omega'$ are equivalent if and only if the following conditions hold.
	\begin{enumerate}
		\item The total numbers of susceptibles, infectives and removed individuals are equal: $|\bs| = \sum_{k\in \ZZ} \rs_k = \sum_{k\in\ZZ} \rs'_k = |\bs'|$, and similarly $|\bi| = |\bi'|$ and $|\br| = |\br'|$.
		\item The numbers of active strains are equal: $|\KK| = |\KK'|$.

		\item Each active strain has the same number of infectives: $\ri_{k_a} = \ri'_{k'_a}$, for $a = 1, \dots, |\KK|$.
		\item The numbers of individuals that are susceptible to the $a$th active strain are equal: $\sum_{k < k_a} \rs_k = \sum_{k < k'_a} \rs'_k$, for $a = 1, \dots, |\KK|$. 
		\item The numbers of removed individuals that will become susceptible to the $a$th active strain are equal: $\sum_{k < k_a} \rr_k = \sum_{k < k'_a} \rr'_k$ for $a = 1, \dots, |\KK|$.
	\end{enumerate}
\end{mydef}

In order to easily refer to the equivalence classes, we define the following representative of each equivalence class.
\begin{mydef}\label{def: rep}
	The \emph{representative} of the equivalence class containing $(\bs, \bi, \br)$, denoted $(\bs^*, \bi^*, \br^*)$ is defined as follows. If $\KK \neq \emptyset$, denote the active strains for the representative by $\KK^* = \{ 1 - |\KK|, \dots, 0 \}$. Let $\phi: \KK \rightarrow \KK^*$ be a bijection defined by $\phi(k_a) = a - |\KK|$ for $a = 1\, \dots, |\KK|$. Then the representative $(\bs^*, \bi^*, \br^*)$ is given by:

\begin{align*}
		\ri^*_k &= \begin{cases}
						\ri_{\phi^{-1}(k)} & \text{for } k \in \KK^*, \\
						0 & \text{otherwise.}
					\end{cases}\\
		\rs^*_k &= \begin{cases} 
\sum_{j = \phi^{-1}(k)}^{\phi^{-1}(k+1) - 1} \rs_j &\text{for } k\in \{ 1-|\KK|,\dots, - 1 \},\\
\sum_{j = \phi^{-1}(0)}^\infty \rs_j &\text{for } k = 0,\\
\sum_{j = -\infty}^{\phi^{-1}(1 - |\KK|) - 1} \rs_j &\text{for } k=-|\KK|,\\
0 & otherwise.  
\end{cases} \\
\rr^*_k &= \begin{cases} 
\sum_{j = \phi^{-1}(k)}^{\phi^{-1}(k+1) - 1} \rr_j &\text{for } k\in \{ 1-|\KK|,\dots, - 1 \},\\
\sum_{j = \phi^{-1}(0)}^\infty \rr_j &\text{for } k = 0,\\
\sum_{j = -\infty}^{\phi^{-1}(1 - |\KK|) - 1} \rr_j &\text{for } k=-|\KK|,\\
0 & otherwise.  
\end{cases}
\end{align*}
If $\KK = \emptyset$ then $\ri^*_k = 0$ for all $k \in \ZZ$ and $\rs^*_0 = \sum_{j \in \ZZ} \rs_j$ and $\rs^*_k = 0$ for $k \neq 0$, and similarly for $\br$. Let $\{\bzero\}$ denote the set of all these absorbing states.
\end{mydef}
In the rest of this paper, the process of representatives on the space of equivalence classes (states described with starred states as in Definition \ref{def: rep}) will be referred to as the normalised process, and will be denoted by $(\bS^*, \bI^*, \bR^*)$ or $(\bS^*, \bI^*)$ as appropriate.
Definitions \ref{def: equiv} and \ref{def: rep} remove all strains with no infective individuals, and give index $0$ to the most recent strain to have emerged and have infectives. All susceptibles and removed individuals are given the strain index one less than the nearest infective above them in strain order. Any individuals immune to all existing strains are given strain $0$, as though they just recovered from the most recently emerged strain.

\begin{example}
	Consider the state $(\bs, \bi, \br) \in \Omega'$ 
	given by
	\begin{align*}
		(\rs_1, \dots, \rs_7) &= (0,0,1,1,0,0,1) \\
		(\ri_1, \dots, \ri_7) &= (0,1,0,0,0,1,0) \\
		(\rr_1, \dots, \rr_7) &= (1,0,0,1,1,0,0)
	\end{align*}
	where all remaining terms of $\bs$, $\bi$, $\br$ are zero. Strains 2 and 6 are active so $\KK = \{2, 6\} \Rightarrow \KK^* = \{-1,0\}$
	and the representative under the equivalence relation is
	given by
	\begin{align*}
		(\rs^*_{-2}, \dots, \rs^*_0) &= (0,2,1), \\
		(\ri^*_{-2}, \dots, \ri^*_0) &= (0,1,1), \\
		(\rr^*_{-2}, \dots, \rr^*_0) &= (1,2,0).
	\end{align*}
\end{example}

\paragraph{Notation}\label{para: notation}
Recall that without the equivalence relation, the state space of the epidemic process was $\Omega'$. We denote the state space of the normalised process over the set of equivalence class representatives by $\Omega$. 

We will use $\bx = (\bs, \bi, \br) \in \Omega$ with, for example, $\bs = (\rs_k)_k \in \{0,\dots,N\}^{\ZZ}$ to denote a typical element of the state space. We will also use, for example, $|\bs| = \sum_{k \in \ZZ} \rs_k$ to denote the total number of susceptibles. A random variable written in calligraphic type, e.g.\ $\cR_k(t)$, refers to a process without the equivalence relation. The corresponding variable written in roman type, e.g.\ $R_k(t)$, refers to the normalised process evolving over representatives from the equivalence classes.


\section{Quasi-stationarity and absorbing states}\label{sec: qsd evo}

Like many infectious disease models, the E-SIRS and E-SIS models defined in Section \ref{sec: models} have an absorbing class of states that corresponds to the population containing no infected individuals, $\bi=\bzero$. For finite population models, the absorbing state is reached with certainty in finite time, and so the limiting distribution is degenerate with no mass in non-absorbing states. However, like the single strain SIS model, these processes may not go extinct for a long time (individuals  can be reinfected) and the transient quasi-stable behaviour is of interest. The quasi-stationary distribution and limiting conditional distribution conditioned on the epidemic not going extinct, represent the long-term average behaviour for an endemic disease. 

\subsection{Properties of quasi-stationary distributions for epidemics}\label{subsec: qsd}
In this section and the rest of this paper, $\PP_{\bu}[A] = \PP[A | X(0) \sim \bu]$.
\begin{mydef}
  Let $X = (X(t))_{t \geq 0}$ be a Markov process on a countable state space $\Omega$ with absorbing state $0$ from which it cannot escape. Then a distribution $\bu$ on $\Omega \setminus \{0\}$ is a \emph{quasi-stationary distribution} (QSD) if $\PP_{\bu}[X(t) \in A | X(t) \neq 0] = \bu(A)$ for all $t \geq 0$.
  
  Given initial condition $\bv$ on $S=\Omega\setminus\{0\}$, $\bu$ is a $\bv$-\emph{limiting conditional distribution} (LCD) if $\limt \PP_{\bv}[X(t) \in A | X(t) \neq 0] = \bu(A)$. Note that, for processes where $S$ is a single communicating class, every QSD $\bu$ is a $\bu$-LCD and every LCD is a QSD.
\end{mydef}

Related to the QSD on irreducible state spaces is the notion of the decay parameter which describes the rate of decay of the transition probabilities.
\begin{mydef}
	Let $X=(X(t))_{t \geq 0}$ be an irreducible Markov process on a countable state space $\Omega$ with absorbing state $0$. Let $\bu$ be a QSD associated to $X$. Then the \emph{decay parameter} $\alpha$ is given by
	\begin{align*}
	\alpha = \inf \{ a \geq 0 : P_{ij}(t) = o(e^{-at}) \}.
	\end{align*}
    for $i, j \in \Omega \setminus \{0\}$
	The \emph{absorption parameter} $\alpha_0$ is given by
	\begin{align*}
	\alpha_0 = \inf \{a \geq 0 : \int_0^\infty \PP_i[T > t] e^{at} dt = \infty \},
	\end{align*}
    for $i \in \Omega \setminus \{0\}$
where $T$ is the extinction time of $X$ starting from state $i$. Note that for irreducible processes, $\alpha$ is independent of $i,j$ and $\alpha_0$ is independent of $i$.
\end{mydef}
According to Theorem 6 of \cite{vanDoorn2013}, a necessary condition for the existence of a QSD is that $\alpha > 0$.
\begin{mythm}\label{qsdepidemic}
  Conditional on non-absorption, the following hold.
  \begin{enumerate}
    \item The QSD for the number of infectives in the single-strain SIS model exists uniquely and gives weight to all states $\{1, \dots, N\}$.
    \item The QSD for the number of susceptibles and infectives in the single-strain SIR model exists uniquely and gives full weight to the state $\{(S,I)=(0,1)\}$.
    \item The QSD for the number of susceptibles and infectives in the single-strain SIRS model exists uniquely, and gives weight to all non-absorbing states.
  \end{enumerate}

\end{mythm}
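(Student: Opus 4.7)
The plan is to treat all three parts within the classical framework that reduces existence and uniqueness of a QSD on a finite state space to a Perron--Frobenius-type statement about left eigenvectors of the sub-generator $Q_S$ obtained by deleting the rows and columns corresponding to absorbing states. For part 1 (SIS) and part 3 (SIRS) the non-absorbing state space is a single finite irreducible communicating class, so the standard finite-state QSD existence/uniqueness theorem (e.g.\ the spectral characterisation cited from \cite{vanDoorn2013}) yields a unique strictly positive left eigenvector for the dominant eigenvalue of $Q_S$, which normalises to the unique QSD with full support. Part 2 (SIR) is different because the non-absorbing states do not communicate, and requires a separate argument exploiting the monotone structure of the dynamics.

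For the SIS model I would verify irreducibility on $\{1,\dots,N\}$ by noting that from any state $i$ one can reach $i+1$ (when $i<N$) via an infection event with rate $\beta i(N-i)/N>0$ and $i-1$ (when $i>1$) via a recovery event with rate $\gamma i>0$, so any two non-absorbing states communicate in finitely many steps. For the SIRS model I would do the analogous verification: from any $(s,i)$ with $i\ge 1$ one can recover down to $(s,1)$, then alternate loss-of-immunity events $(s,1)\mapsto(s+1,1)$ and infection/recovery steps to reach any target $(s',i')$ with $i'\ge 1$. In both cases the non-absorbing state space is finite and irreducible, and finite-state Perron--Frobenius produces a unique, strictly positive QSD.

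The main obstacle is part 2 (SIR), where the non-absorbing dynamics are monotone: infection $(s,i)\mapsto(s-1,i+1)$ strictly decreases $s$ while keeping $s+i$ constant, and recovery $(s,i)\mapsto(s,i-1)$ strictly decreases $s+i$. Consequently $(0,1)$ is the unique non-absorbing state with no non-absorbing successors, and every other non-absorbing state is transient to $\{(0,0),(0,1)\}$. I would exploit this by ordering the non-absorbing states lexicographically by $(-(s+i),-s)$, so that $Q_S$ becomes upper triangular with diagonal entries $-(\beta s i/N+\gamma i)$; the eigenvalues of $Q_S$ can then be read off the diagonal, and the one closest to zero is $-\gamma$, attained only at $(s,i)=(0,1)$. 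A short direct check shows that $\delta_{(0,1)}$ is a left eigenvector for this simple eigenvalue, and it is already a probability distribution, hence a QSD.

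For the uniqueness claim in part 2, I would argue that any QSD $u$ must coincide with the Yaglom limit started from $u$, since the conditional distribution at every time $t$ equals $u$ by definition. From any non-absorbing initial state $(s,i)$, the transition probabilities $P_{(s,i)}[X(t)=(s',i')]$ decay at exponential rates equal to $\beta s'i'/N+\gamma i'$, which for every $(s',i')\ne(0,1)$ is strictly greater than $\gamma$; hence the surviving mass at time $t$ is of order $e^{-\gamma t}$ and asymptotically concentrates at $(0,1)$. Taking the limit and using dominated convergence across the finite state space, the Yaglom limit from any non-absorbing initial distribution is $\delta_{(0,1)}$, forcing $u=\delta_{(0,1)}$. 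I expect the eigenvalue-separation step and the justification that all non-$(0,1)$ mass decays strictly faster than $e^{-\gamma t}$ to be the only part of the argument requiring care; the SIS and SIRS cases should follow directly from irreducibility and the cited theorem.
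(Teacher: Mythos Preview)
Your treatment of parts 1 and 3 is essentially the paper's own: the paper invokes Theorem 1 of \cite{vanDoorn2008} (finite irreducible state space $\Rightarrow$ unique QSD with full support), and your irreducibility verifications for SIS and SIRS are exactly the missing details behind that citation.

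For part 2 you take a genuinely different route. The paper simply appeals to Theorem 8 of \cite{vanDoorn2008} on reducible finite chains, which says the QSD is supported on the ``slowest'' communicating class together with everything reachable from it; since every non-absorbing SIR state is its own communicating class and $(0,1)$ minimises the exit rate $\beta si/N+\gamma i$, the cited theorem gives $\delta_{(0,1)}$ directly. Your approach instead triangularises $Q_S$ explicitly, reads off the eigenvalues, exhibits $\delta_{(0,1)}$ as the dominant left eigenvector, and proves uniqueness via a Yaglom-limit argument. This is more elementary and self-contained, at the cost of a little more work. One point to tighten: the assertion that $P_{(s,i)}[X(t)=(s',i')]$ decays at rate exactly $\beta s'i'/N+\gamma i'$ is not true in general (the rate is the smallest exit rate along any path from $(s,i)$ to $(s',i')$, which need not be attained at the endpoint). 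What you actually need, and what your triangular structure gives, is that $(0,1)$ is terminal within $S$, so it lies on no path to any other state, and hence the decay rate to any $(s',i')\neq(0,1)$ is strictly larger than $\gamma$. With that correction the argument goes through.
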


\begin{proof}
Theorem 1 of \cite{vanDoorn2008} states that QSDs exist and are unique on finite irreducible state spaces, and so there is a unique QSD for the SIS model and for the SIRS model conditional on $\{I > 0\}$, and non-zero weight is given to all non-absorbing states. For reducible processes, Theorem 8 of \cite{vanDoorn2008} states that QSDs will give full weight to the communicating class with the longest expect time to leave and any states accessible from this ``slowest'' communicating class. This characterises the QSD for the SIR model.
\end{proof}

Further work on characterising the QSD for the standard SIS model can be seen in \cite{Nasell1999,nasell2011,clancy2003} making use of recurrent processes and normal approximations.

\subsection{Existence and uniqueness}\label{subsec: e and u}
Here we will summarise the existence and uniqueness results for the E-SIS and E-SIRS processes. 
\begin{mythm}\label{thm: eu SIStr}
	Let the E-SIS model be defined as in Subsection \ref{subsec: sis} with parameters $\beta, \gamma > 0$ and $\theta \in (0,1]$. Then for the normalised process, conditional on the events $\{I(t) > 0\}$, there exists a unique QSD which equals the unique LCD of the process and gives weight to all non-absorbing (i.e.\ transient) states, $\{(\bs,\bi)\in\Omega: |\bi| > 0\}$. If $\theta = 0$ and the process begins with a single infective, then there exists a unique LCD which gives full weight to the state with a single infective with strain index $0$, and $N-1$ susceptibles with strain index 0.
\end{mythm}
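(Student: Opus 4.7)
The plan is to split into the two cases $\theta\in(0,1]$ and $\theta=0$, reducing each to Theorem \ref{qsdepidemic} and the classical finite-state QSD theory of \cite{vanDoorn2008}. Most of the work is in the $\theta>0$ case, where I verify that the normalised chain on $\Omega\setminus\{\bzero\}$ is finite and irreducible; Theorem 1 of \cite{vanDoorn2008} (as invoked in the proof of Theorem \ref{qsdepidemic}) then immediately delivers a unique QSD giving mass to every non-absorbing state, coinciding with the unique LCD from any starting distribution.

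Finiteness of $\Omega\setminus\{\bzero\}$ is immediate: a non-absorbing representative satisfies $1\le|\KK|\le|\bi|\le N$ and is determined by a tuple of $2|\KK|+1$ non-negative integers summing to $N$, so there are only finitely many such tuples. For irreducibility, I fix a reference state $x_0=(\rs^*_{-1},\rs^*_0,\ri^*_0)=(N-1,0,1)$ and argue (i) every non-absorbing state reaches $x_0$, and (ii) $x_0$ reaches every non-absorbing state, each via a finite sequence of positive-rate transitions. For (i): if $|\bi|=N$, first perform one recovery so that a susceptible exists; then use a mutation infection (which has rate $\beta\theta N^{-1}|\bs||\bi|>0$) to create a new strain carried by a single infective; finally recover, one at a time, every infective at one of the older strains. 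After relabelling via Definition \ref{def: rep}, the surviving configuration is a single infective at the fresh strain together with $N-1$ susceptibles whose original strains all lie below the fresh strain, which is exactly $x_0$. For (ii), starting from $x_0$, I first perform $|\KK'|-1$ successive mutation infections (each consuming one individual from $\rs^*_{-|\KK|}$) to arrive at a state with $|\KK'|$ active strains and exactly one infective at each; then, for each active strain, I perform enough non-mutation infections to raise its infective count to the target plus the desired count of susceptibles in that strain's bin; finally I perform recoveries at each active strain to deposit individuals into each intermediate bin $\rs^*_k$. A one-line accounting check shows that the total susceptibles consumed equals $N$ minus the target value of $\rs^*_{-|\KK'|}$, so the construction is consistent.

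For the $\theta=0$ case, no mutation is possible, so from the stated initial condition the set of ever-active strains is the singleton $\{1\}$ for all time, and $(\cS_0(t),\cI_1(t),\cS_1(t))$ is exactly a standard single-strain SIR process, as already noted in Subsection \ref{subsec: no evo}. Part 2 of Theorem \ref{qsdepidemic} then gives a unique LCD concentrated on $(S,I,R)=(0,1,N-1)$. Translating through Definition \ref{def: rep} with $|\KK|=1$ and $k_1=1$: the surviving infective contributes $\ri^*_0=1$, the $N-1$ individuals with strain $1\ge k_1$ who are susceptible contribute $\rs^*_0=N-1$, and $\rs^*_{-1}=0$, which matches the representative state claimed in the theorem.

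The hard part will be the irreducibility argument in the $\theta>0$ case, because the dimension of the representative changes each time a strain becomes (or ceases to be) active, so every proposed move has to be matched with a legitimate event in the non-normalised chain and then the relabelling via Definition \ref{def: rep} must be read off correctly. In particular, one must check that the ``mutate-then-drain'' construction for reaching $x_0$ never accidentally enters the absorbing class before completion, and that the converse construction from $x_0$ accounts correctly for the downward shift of bin indices caused by each intermediate mutation. Once these index-tracking details are pinned down, the remaining pieces of the proof are direct appeals to existing QSD results.
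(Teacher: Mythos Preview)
Your proposal is correct and follows essentially the same route as the paper: establish that the non-absorbing representatives form a finite irreducible class, invoke the standard finite-state QSD/LCD result (the paper cites Theorem~3 of \cite{vanDoorn2013}, you cite \cite{vanDoorn2008}), and for $\theta=0$ identify with the single-strain SIR model and apply Theorem~\ref{qsdepidemic}(2). You give a more explicit irreducibility construction and choose a different hub state (susceptibles at index $-1$ rather than the paper's index $0$), but the overall strategy is the same; note that your step (ii) uses non-mutation infections and so, like the paper's terse argument, tacitly needs $\theta<1$ to reach targets with some $\ri'_k>1$.
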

\begin{proof}
	For $\theta \in (0,1]$, we obtain existence and uniqueness by proving that $S=\Omega\setminus\{\bzero\}$ is a single finite communicating class, which immediately gives existence, uniqueness and equality of the QSD and LCD by Theorem 3 from \cite{vanDoorn2013}. 
	
	Under the equivalence relation, there can be at most $N$ different strain indices present in the population. This implies that every individual must appear in one of the states $\rs_{1-N},\dots \rs_0$ or $\ri_{1-N},\dots, \ri_0$. Therefore we can bound above the size of $S$, the set of transient (i.e.\ non-absorbing) states, by $(2N)^N$. 
	
	One can see that the transient states form a single communicating class by noting that one can get from a single infective of strain index $0$ with $N-1$ susceptibles of index 0 to any other state through infections (mutation and non-mutation) and recoveries. If all individuals are infected and then all but one recovers, then the process returns to the single infective case mentioned above.
	
	For $\theta = 0$ we consider the E-SIS model starting with a single infective of strain $0$ and susceptibles of strain index $-1$. If $\theta = 0$, then mutation is impossible. As a result, once an individual has become infected and recovered, they join the $s_0$ class and cannot be reinfected. In this way $\{S_0(t)\}$ behaves identically to the $\{R(t)\}$ class in the SIR model, and we identify the two models in this way. Point 2 in Theorem \ref{qsdepidemic} then gives the required LCD.
\end{proof}


\begin{mythm}\label{thm: eu SIRStr}
	Let the E-SIRS model be defined as in Subsection \ref{subsec: sirs} with parameters $\beta, \gamma, \delta > 0$ and $\theta \in (0,1]$. Then, conditional on having at least one infective, there exists a unique QSD. If $\theta = 0$ and there is initially one infective, a QSD still exists and gives full weight to the state with one infective with strain index $0$ and $N-1$ susceptibles with strain index $0$.
\end{mythm}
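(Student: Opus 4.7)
The proof will closely parallel that of Theorem \ref{thm: eu SIStr}, handling $\theta \in (0,1]$ via irreducibility on a finite state space and $\theta = 0$ by identifying a reducible structure whose slowest communicating class carries the entire mass of the QSD. For both cases the appropriate tool is already in hand: Theorem 3 of \cite{vanDoorn2013} for the irreducible finite setting, and Theorem 8 of \cite{vanDoorn2008} for the reducible case.

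For $\theta \in (0,1]$ I would first argue that $S = \Omega \setminus \{\bzero\}$ is finite. Under the equivalence relation at most $N$ distinct strain indices can be present simultaneously, so after normalisation every individual lives in one of the $3N$ compartments $\{\rs_k, \ri_k, \rr_k : k \in \{1-N,\dots,0\}\}$, giving the (very loose) bound $|S| \leq (3N)^N$. I would then show that all transient states form a single communicating class by picking a base state $\bx_0$ with $\ri^*_0 = 1$, $\rs^*_{-1} = N-1$ and all other coordinates zero, and verifying two directions. From $\bx_0$ to any target state: use infections with mutation from the surviving infective to build up infectives at the required strain indices, use infections without mutation to fix the distribution of infectives, then apply recoveries to populate the $\rr^*$ coordinates and loss-of-immunity events to redistribute back into $\bs^*$. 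From any transient state back to $\bx_0$: let all but one infective recover, let all removed individuals lose immunity, and then through a final sequence of recoveries the normalisation collapses all but the highest active strain, which is then relabelled as index $0$. Throughout both constructions at least one infective is preserved, so the path stays in $S$. Existence, uniqueness and the identification of the QSD with the LCD then follow immediately.

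For $\theta = 0$, no new strains ever appear, so starting with one infective of strain $0$ (under normalisation) the process evolves on the four coordinates $(\rs^*_{-1}, \ri^*_0, \rr^*_0, \rs^*_0)$, with $\rs^*_{-1}$ monotonically non-increasing: once an individual has been through the $S_{-1}\to I_0 \to R_0 \to S_0$ cycle they are permanently immune since no further mutations are possible. The chain is therefore reducible. I would enumerate exit rates of each communicating class and show that the singleton $\{\ri^*_0 = 1,\, \rs^*_0 = N-1\}$ leaves at the minimal rate $\gamma$, while every other non-absorbing class has strictly faster exit rate (at least $\gamma + \delta$ if any $\rr^*_0 > 0$, at least $2\gamma$ if $\ri^*_0 \geq 2$, and at least $\gamma + \beta(N-1)/N$ if any $\rs^*_{-1} > 0$). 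Theorem 8 of \cite{vanDoorn2008} then forces the QSD to concentrate on this singleton.

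The main obstacle is the communicating-class argument for $\theta \in (0,1]$, because the normalisation dynamically relabels strains each time an active strain disappears, so writing down an event sequence that traces a valid path of normalised states (and in particular never drops $|\bi^*|$ to zero) requires some care. A cleaner approach might be to argue by induction on $|\KK|$, showing that any state with $|\KK| = m$ communicates with some canonical state having $m$ active strains and then reducing $m$ one at a time; but the direct construction sketched above should suffice, with routine bookkeeping handled inline.
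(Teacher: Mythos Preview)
Your proposal is correct and follows essentially the same route as the paper: bound the normalised state space by $(3N)^N$, verify a single communicating class for $\theta\in(0,1]$, and for $\theta=0$ compare the exit rates of the (singleton) communicating classes to identify the slowest one. One minor slip: the lower bound on the exit rate when $\rs^*_{-1}>0$ should be $\gamma+\beta/N$ (take $\rs^*_{-1}=1$, $\ri^*_0=1$, $\rr^*_0=0$), not $\gamma+\beta(N-1)/N$, though this does not affect the conclusion since it is still strictly above $\gamma$.
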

\begin{proof}
	For $\theta \in (0,1]$, one follows the same argument as in Theorem \ref{thm: eu SIStr}, this time bounding the size of the state space by $(3N)^N$, since individuals may also reside in classes $\rr_{1-N}, \dots, \rr_0$. The fact that the transient states form a single communicating class also follows as in Theorem \ref{thm: eu SIStr}.
	For $\theta = 0$, we see that each state that can be reached is a transient communicating class; there is no way to return to a state once left. As such, we need to consider the decay parameter on leaving each state, which equals the exponential rate of leaving such a state: $\beta \rs_{-1} \ri_{0} /N + \gamma \ri_0 + \delta \rr_0$. The decay parameter for the process is therefore the minimal such value across all non-absorbing states. Therefore $\rs_{-1} = 0$, $\ri_0 = 1$ and $\rr_0 = 0$ minimise the decay parameter. According to Theorem 1 of \cite{vanDoorn2008} this forces the QSD to have full mass on this state where $\ri_0 = 1, \rs_0 = N-1$, since the only state accessible from this state is an absorbing one.
\end{proof}

\subsection{Sampling the quasi-stationary distribution}\label{sampling}

Samples from quasi-stationary distributions can be produced using the sequential Monte Carlo (SMC) sampler with stopping time resampling methods developed in \cite{griffin2016}. In brief, $M$ realisations of the model (referred to as particles) are simulated forward in time. Absorbed particles (with no infected individuals) are given weight zero and non-absorbed particles are given weight 1 initially. The distribution of weights converges to the limiting conditional distribution. Once the total weight drops below a proportion $\lambda$ of the initial weight, the particles are replenished via a resampling step. Combine-split resampling \cite{griffin2016} was used, which prevents any occupied states from being lost and has the advantage that the distribution of weights remains unchanged after resampling. This resampling method combines particles in the same location together, draws new particle locations uniformly from the existing locations and equalizes the weight on particles in the same location.  In our implementation, after a burn-in of $T_\text{b}=1$, weighted samples were drawn every $T_\text{d}=1$ time units until time $T_\text{max}=100$. 

Figure \ref{fig: example sirs} shows the expected number of individuals in each class under the QSD. In this example we used $M=1000$ particles and a resampling threshold of $\lambda=0.6$. The Figure shows that when $\delta$ is smaller there is a larger proportion of globally immune individuals in the removed classes, and so the population can support fewer strains.

\begin{figure}[t!]
	\centering
 		\includegraphics{./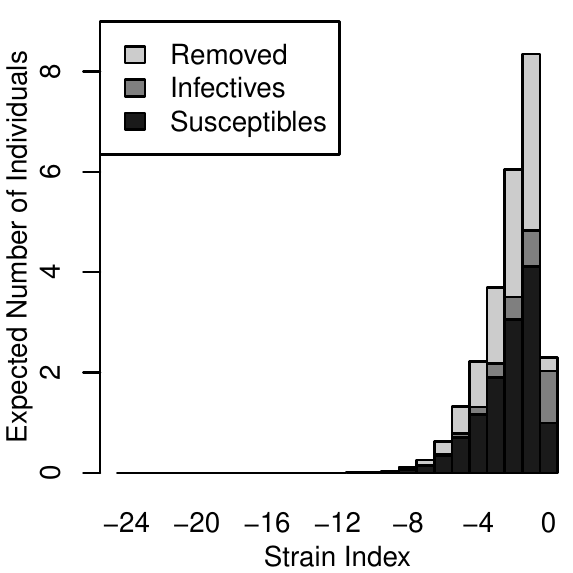}
 		\includegraphics{./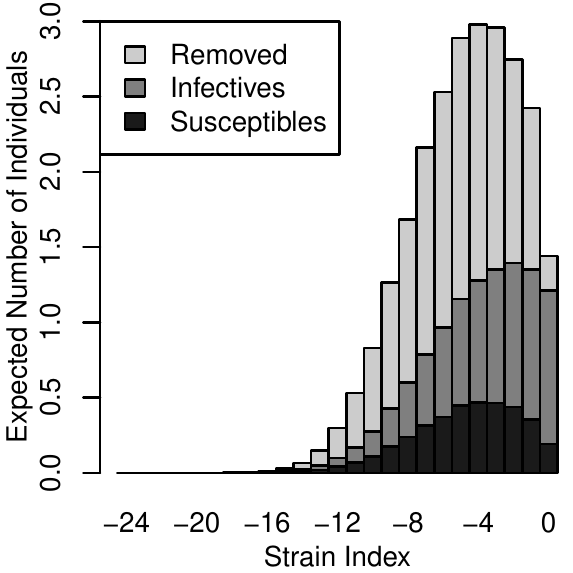}
	\caption{Comparisons of expected population composition under E-SIRS QSDs with (a) $\beta = 2$, $\theta = 0.9$, $\delta=0.2$, $N = 25$; (b) $\beta=2$, $\theta = 0.9$, $\delta=2$, $N=25$.}
	\label{fig: example sirs}
\end{figure}

\section{Limiting behaviour}\label{sec: limit}
One aspect of interest is how the E-SIS and E-SIRS processes relate to those without mutation. To this end, we consider the limits of the times to extinction of the processes as $\theta$ tends to 0 or 1, and the limit, for fixed $\theta$, of the time to extinction as the population size $N$ tends to infinity. Large population limits can be used to justify the use of infinite population models as approximations for, for example, the decay parameters for the relevant processes which we cannot obtain analytically. 

\subsection{Limits as mutation probability changes}\label{subsec: theta limit}

\begin{mythm}\label{thm: tt 1}
	Let $T^{\theta}$ be the time to extinction of the E-SIS model, and $T^1$ the time to extinction of the standard SIS model, both starting from a single infective (nominally of strain index 1) in a population of $N$ individuals. Then $T^{\theta} \rightarrow T^1$ in distribution as $\theta \rightarrow 1$.
\end{mythm}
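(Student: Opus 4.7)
My plan is to couple the E-SIS$(\theta)$ process with a single-strain SIS$(\beta,\gamma)$ process so that on an event whose complement has probability of order $1-\theta$ the two processes are pathwise identical, and then conclude convergence in distribution from $\PP(T^\theta = T^1) \to 1$.

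The coupling I would use: for each pair of individuals draw a Poisson process of contact times at rate $\beta/N$, and for each individual draw an independent recovery clock at rate $\gamma$. Attach to each contact an independent $U \sim \text{Uniform}(0,1)$, and label the contact a \emph{mutation event} if $U<\theta$ and a \emph{non-mutation event} otherwise. Drive the SIS process by the rule that every contact between an infective and a susceptible produces an infection, and drive the E-SIS process by the rules of Subsection \ref{subsec: sis} applied to these labels; recoveries fire in whichever process has the individual infective at that instant. A short induction on events shows that any individual infective in the E-SIS copy is also infective in the SIS copy, so $T^\theta \leq T^1$ almost surely.

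The key step is an invariance claim. Let $A = \{\text{no non-mutation contact occurs in } [0,T^1]\}$. Since mutation contacts always succeed in E-SIS and recoveries act identically in both copies, induction on the ordered sequence of contact and recovery events shows that on $A$ the infective sets in the two processes coincide throughout $[0,T^1]$; in particular $T^\theta = T^1$ on $A$. The tail bound on $A^c$ is then routine: the SIS chain is finite-state with an absorbing state reachable from every transient state, so $\EE[T^1]<\infty$; the contact Poisson processes have total rate $\beta(N-1)/2$, and since $T^1$ is a stopping time for the joint filtration, optional stopping applied to the Poisson compensator gives $\EE[C] \leq \beta(N-1)\EE[T^1]/2$ for $C$ the number of contact events in $[0,T^1]$. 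The labels $U$ are independent of all clocks (hence of $T^1$ and $C$), so conditionally on $C$ the number of non-mutation contacts in $[0,T^1]$ is $\text{Binomial}(C,1-\theta)$, and Markov's inequality yields
\[
\PP(A^c) \;\leq\; (1-\theta)\,\EE[C] \;\leq\; \frac{(1-\theta)\beta(N-1)\EE[T^1]}{2} \;\longrightarrow\; 0
\qquad (\theta \to 1).
\]
Combining the steps, $\PP(T^\theta = T^1) \to 1$, which delivers convergence in distribution (in fact, in probability).

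The main obstacle is the invariance argument in Step 1: one must case-split carefully on mutation contacts, recoveries, and the absence of non-mutation contacts to confirm that individual-level statuses remain synchronised throughout $[0,T^1]$ on $A$. The dominance claim $T^\theta \leq T^1$ and the Poisson-thinning tail bound are otherwise standard.
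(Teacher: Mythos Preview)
Your argument is correct and rests on the same coupling construction as the paper (shared contact Poisson processes, recovery clocks, and uniform mutation labels attached to contacts). The difference lies in how agreement of the two processes is extracted. The paper argues pointwise: for a fixed realisation $\omega$ it notes that only finitely many contacts occur in $[0,T^1(\omega))$, identifies those that would fail in E-SIS if they happened to be non-mutations (the ``potentially unsuccessful'' contacts), and takes $\theta^*(\omega)$ to be the maximum of the uniforms attached to these. For $\theta\geq\theta^*(\omega)$ every such contact is a mutation, the two processes coincide pathwise, and one obtains $T^\theta(\omega)\to T^1(\omega)$ almost surely along the coupling. Your route instead bounds $\PP(A^c)$ globally by $(1-\theta)\,\EE[C]$ via $\EE[T^1]<\infty$ and Markov's inequality, which yields convergence in probability together with an explicit $O(1-\theta)$ rate. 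Both suffice for the stated distributional convergence; your version is more elementary and quantitative, while the paper's delivers the slightly stronger almost-sure statement and uses a finer sufficient event (only the contacts that would genuinely fail as non-mutations need to be mutations, not all contacts). The dominance $T^\theta\leq T^1$ that you establish is correct under your real-time graphical coupling but is not actually used in your argument; the paper invokes this monotonicity only later, in the sandwich inequality for the large-$N$ limit.
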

\begin{proof}
		We make use of a coupling of $(T^{\theta}: 0 <\theta < 1)$ and $T^1$. 
	 Firstly, let $\bX^\theta(t) = (\bS^\theta(t), \bI^\theta(t))$ be the E-SIS model.
	We assume the process to be defined over a population indexed by $n=1, \dots, N$.
	\begin{itemize}
		\item For each individual $n$, define a sequence of i.i.d. infectious periods $\{ L^{(n)}_m \sim \Exp (\gamma): m \in \NN \}$. 
		\item For each \emph{ordered} pair of individuals $(n, n')$, define a homogeneous Poisson process $\{A^{(n,n')}(t)\}$ on $[0,\infty)$ with rate $\beta/N$. 
		\item For each ordered pair $(n, n')$ define a sequence of uniform random variables $U^{(n,n')}_l \sim \text{Unif}[0,1]$ for $l \in \NN$. 
		\item Let all $L^{(n)}_m$, $A^{(n,n')}$ and $U_l^{(n,n')}$ be independent.
	\end{itemize}
	Now let $(E, \cF, \PP)$ be the product space of these random processes and variables.
	Using these building blocks, we construct an E-SIS model $\{\bX^\theta(t)\}$ and SIS model $\{\bY(t) = (S(t), I(t))\}$ as follows. Set $S^\theta_{-1}(0) = N-1, I^\theta_0(0) = 1$ for the E-SIS model and set $S(0) = N-1$, and $I(0) = 1$ for $\bY(0)$. Assume the initial infective individual has index $n=1$ without loss of generality. In both processes infectious individual $n$ generates contacts with each susceptible individual $n'$ at points of the Poisson process $\{A^{(n,n')}(u)\}$, where $u$ denotes the cumulative time that individual $n$ has been infectious and $n'$ has been susceptible. In other words the Poisson processes are stopped whenever it is not possible for individual $n$ to infect individual $n'$. At each contact event in the SIS model an infection occurs. The newly infected individual $n'$ stays infected for a period of length $L^{(n')}_{m(n',t)+1}$, where $m(n',t)$ is the number of infections individual $n'$ has recovered from up to the contact time $t$. In the E-SIS model the $i$th contact event in $\{A^{(n,n')}(u)\}$ results in a mutation if and only if $U^{(n,n')}_i \leq \theta$, in which case individual $n'$ is infected with a new strain and given the lowest unused strain index. However if $U^{(n,n')}_i > \theta$ then individual $n$ attempts to infect $n'$ with their current strain and the infection is successful if the strain index of individual $n'$ is strictly less than the stain index of $n$. As in the SIS model, successful infections in the E-SIS model are given infectious period length $L^{(n')}_{m(n',t)+1}$. Notice that under this coupling non-mutation contacts of $n'$ by $n$ are only successful if the strain index of $n$ is strictly greater than that of $n'$ in the E-SIS model. However mutation contacts and all contacts in the SIS model are always successful.
 
	Fix $\omega \in E$, our probability space. For the SIS model, we have $T^1 < \infty$ almost surely. On the interval $[0, T^1(\omega))$, there are two possibilities for the E-SIS model. At each infective-susceptible contact we compare the strain indices. The first possibility is that every contact will always lead to a successful infection, arising from a sequence of infection events which always contact a susceptible of a lower index. In this case, we have $T^{\theta}(\omega) = T^1(\omega)$ for all $\theta \in [0,1]$. The second possibility is one or more  ``potentially unsuccessful'' contact events exist, in which if the event were to be non-mutation, it would fail. This failure occurs if the relevant $U^{(n,n')}_i > \theta$. Since we must have a finite number of such events occurring in $[0, T^1)$, we can find $\theta^*$ such that $U^{(n,n')}_i \leq \theta^*$ for all such $U^{(n,n')}_i$ corresponding to potentially unsuccessful events. This means that for $\theta\geq \theta^*$ we must have $T^{\theta^*}(\omega) = T^1(\omega)$. So for every $\omega \in E$, there exists $\theta^* \in (0,1)$ such that $T^{\theta}(\omega) = T^1(\omega)$ for all $\theta \geq \theta^*$. Hence $T^{\theta}(\omega) \rightarrow T^1(\omega)$ as $\theta \rightarrow 1$ for almost every $\omega \in E$, and hence $T^\theta\rightarrow T^1$ in distribution by the Skorohod Dudley theorem from, for example, Theorem 3 of \cite{dudley1968}.
\end{proof}

\begin{mythm}\label{thm: tt 0}
	Let $T^0$ be the time to extinction of the standard SIR model. Then $T^{\theta} \rightarrow T^0$ in distribution as $\theta \rightarrow 0$.
\end{mythm}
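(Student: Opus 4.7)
The plan is to mirror the coupling argument used in Theorem \ref{thm: tt 1} but swap the roles: couple the family $(T^{\theta})_{\theta\in(0,1)}$ to an SIR clock rather than to an SIS clock. On a common probability space $(E,\cF,\PP)$, I would use the same primitives as before: for each individual $n$ an i.i.d.\ sequence $L^{(n)}_m\sim\Exp(\gamma)$ of infectious periods, for each ordered pair $(n,n')$ a rate-$\beta/N$ Poisson process $A^{(n,n')}$ of contact times, and i.i.d.\ uniforms $U^{(n,n')}_i\sim\text{Unif}[0,1]$, all independent. The E-SIS process $\bX^{\theta}$ is driven from these exactly as in Theorem \ref{thm: tt 1}. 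A standard SIR process $(S(t),I(t),R(t))$ is driven by the same primitives, with each element of $A^{(n,n')}$ producing an infection iff $n'$ is currently susceptible and $n$ currently infectious, and with the recovered individual being held in $R$ rather than recycled to $S$. Initial condition: $\cS^{\theta}_{-1}(0)=S(0)=N-1$, $\cI^{\theta}_0(0)=I(0)=1$, so that the identification $\cS_0^\theta\leftrightarrow S$, $\cI_1^\theta\leftrightarrow I$, $\cS_1^\theta\leftrightarrow R$ of Section \ref{subsec: no evo} is available from the start.

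Fix $\omega\in E$. Because $T^{0}(\omega)<\infty$ almost surely and the rates are bounded, only finitely many contact events in the $A^{(n,n')}$ and hence only finitely many uniforms $U^{(n,n')}_i$ are consulted on $[0,T^{0}(\omega))$; let $U^{\ast}(\omega)>0$ be their minimum. For any $\theta<U^{\ast}(\omega)$, every contact inspected by the E-SIS on that interval satisfies $U^{(n,n')}_i>\theta$, so no mutation fires. In the absence of mutations only strain $1$ exists among infectives, and a strain-$1$ infective can only transmit to an individual of strain index strictly less than $1$, i.e.\ a member of $\cS_0^{\theta}$; upon recovery the infective enters $\cS_1^{\theta}$ which is then inert to strain $1$. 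This is precisely the SIR dynamics under the identification above, driven by the same Poisson and exponential primitives, so the two coupled trajectories agree event-by-event on $[0,T^{0}(\omega))$. In particular the last recovery happens at the same time, and since $\{|\bi|=0\}$ is absorbing for the E-SIS, we conclude $T^{\theta}(\omega)=T^{0}(\omega)$ for all $\theta<U^{\ast}(\omega)$.

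Hence $T^{\theta}(\omega)\to T^{0}(\omega)$ for $\PP$-almost every $\omega$, and convergence in distribution follows by the Skorohod--Dudley theorem, exactly as at the end of the proof of Theorem \ref{thm: tt 1}. The one subtle point to verify carefully is the consistency of the event sequences under the coupling: one must check that as long as no mutation has yet occurred, the ``decision'' carried out by the E-SIS at each contact (comparison of strain indices) gives the same outcome as the corresponding decision in SIR (susceptible vs.\ non-susceptible status of the target), and that once both models have no infectives the shared Poisson primitives can generate no further activity. Both facts are direct from the bookkeeping above, but they are what make the single random bound $U^{\ast}(\omega)$ suffice; I anticipate this is the only place where some care is needed, and no new technical machinery beyond that of Theorem \ref{thm: tt 1} is required.
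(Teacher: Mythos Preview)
Your proposal is correct and follows essentially the same route as the paper's proof: couple E-SIS and SIR on common primitives, observe that in the absence of mutations the two processes coincide under the identification of $\cS_{k^*}^\theta$ (recovered susceptibles) with $R$, and then for each $\omega$ find a threshold on $\theta$ below which none of the finitely many uniforms consulted before $T^0(\omega)$ triggers a mutation.

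Two small remarks. First, the ``subtle point'' you flag is exactly the one the paper makes explicit: for the event sequences to line up, the SIR contact clock $A^{(n,n')}$ must advance whenever $n$ is infective and $n'$ is susceptible \emph{or removed} in the SIR model, so that it matches the E-SIS clock (which runs whenever $n'$ is E-SIS--susceptible, a status retained after recovery). Your description of the SIR driver is slightly ambiguous on this; with that convention the bookkeeping you describe goes through. Second, there is an index slip: with initial condition $\cI^\theta_0(0)=1$, $\cS^\theta_{-1}(0)=N-1$, the identification should read $\cS^\theta_{-1}\leftrightarrow S$, $\cI^\theta_0\leftrightarrow I$, $\cS^\theta_0\leftrightarrow R$, and the single circulating strain in the no-mutation regime has index $0$, not $1$.
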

Intuitively, one can think of identifying the $\cS_1$-class for the E-SIS model 
and the $R$-class of the SIR model. As mutation events get rarer, the chance of mutation happening before extinction becomes smaller and smaller, and so the two processes are more likely to coincide under a suitable coupling until extinction.
\begin{proof}
This proof follows in a similar fashion to Theorem \ref{thm: tt 1}. In this version, the coupling is constructed between the E-SIS and the SIR model. The only differences are that in the SIR model individuals enter the removed category after their infectious period and the Poisson process $\{A^{(n,n')}(u)\}$ progresses during any time for which $n$ is infective and $n'$ is susceptible in the E-SIS model (as before); but when $n'$ is susceptible \emph{or removed} in the SIR model. 

In the E-SIS model infectious contacts between $n$ and $n'$ are only successful if the event is a mutation or $n'$ is of a strictly lower strain index than $n$. In the SIR model, only the first infectious contact is successful. This means that the two epidemics must be identical up to the time of the first repeat contact, when one identifies the $\{\cS_1, \cS_2, \cS_3, \dots\}$ classes in the E-SIS model with the $R$ class of the SIR model.

Similar to the proof of Theorem \ref{thm: tt 1}, for each $\omega\in E$ one can find a value of $\theta^*$ so that $T^{\theta^*}(\omega) = T^0(\omega)$ for all $\theta \leq \theta^*$, and so $T^{\theta} \rightarrow T^0$ almost surely as $\theta\rightarrow 0$ and hence $T^\theta\rightarrow T^1$ in distribution by the Skorohod Dudley theorem of \cite{dudley1968}.
\end{proof}

\subsection{Large population limits}\label{subsec: bdps}
In order to obtain some large population limit results, we will consider an ``infinite'' population model. We will refer to this as an \emph{Evolving Birth-Death Process} (E-BDP). More precisely we assume that infected individuals are a negligible part of an infinite population of individuals that are not immune to any strains at the start of the epidemic, and so all infections will be successful almost surely.  This implies infections from a given strain $k$ and recoveries from that strain behave as a linear birth-death process with birth rate $\beta$ and death rate $\gamma$. Additionally, at the point of each infection, with probability $\theta \in [0,1]$, the new infective is infected with a previously unseen strain, and given the next available strain index $K^* + 1$.

The possible events comprise:
\begin{itemize}
	\item \emph{Infection with mutation:} $\cI_{K^* + 1}(t)=0 \mapsto \cI_{K^* + 1}(t)=1$
	with rate $\beta \theta \cI(t)$.
	\item \emph{Infection without mutation:} $\cI_k(t) \mapsto \cI_k(t) + 1$
	for $k \in \ZZ$ with rate  \mbox{$\beta(1-\theta)\cI_k(t)$}.	
	\item \emph{Recovery:}  $\cI_k(t) \mapsto \cI_k(t) - 1$
	for $k \in \ZZ$ with rate $\gamma \cI_k(t)$.
	
\end{itemize}

After it emerges, each strain behaves according to a linear birth-death process with birth rate $\beta(1-\theta)$ and death rate $\gamma$. The total number of infectives $\cI(t)$ also behaves according a birth-death process with birth rate $\beta$ and death rate $\gamma$.

The time to extinction of the E-SIS model converges to that of the E-BDP model, noting that under a suitable coupling, the time to extinction of the E-BDP equals the Linear BDP without mutation. This leads us to the following result.
\begin{mythm}
	Let $T^{\theta,N}$ be the time to extinction of the E-SIS model, and $T$ the time to extinction of the E-BDP model. Then we have $T^{\theta,N} \rightarrow T$ in distribution as $N \rightarrow \infty$ when $\beta < \gamma$. If $\beta \geq \gamma$, then on the event $\{T < \infty\}$, a region of probability $1 - \gamma/\beta$, we also have $T^{\theta,N} \rightarrow T$ in distribution as $N \rightarrow \infty$
\end{mythm}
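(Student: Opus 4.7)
The plan is to construct a coupling between the E-SIS model on a population of size $N$ and the E-BDP on a common probability space so that on $\{T<\infty\}$ the two extinction times coincide with probability tending to one as $N\to\infty$. As noted in the preamble to the theorem, the total number of infectives in the E-BDP evolves as a linear birth-death process with rates $\beta,\gamma$, so $T$ has the distribution of the extinction time of such a process; in particular $\PP(T<\infty)=1$ when $\beta<\gamma$ and $\PP(T<\infty)=\gamma/\beta$ when $\beta\geq\gamma$.

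For the coupling I would reuse the construction in the proof of Theorem \ref{thm: tt 1}: attach to each ordered pair $(n,n')$ a Poisson process $A^{(n,n')}$ of rate $\beta/N$ of potential contacts, with independent uniform marks $U_i^{(n,n')}$ determining mutation. The E-SIS process reads these as in Theorem \ref{thm: tt 1}, accepting a contact iff it is a mutation or the strain index of $n'$ is strictly less than that of $n$ (and $n'$ is susceptible). The E-BDP is driven by the same events, ignoring the target identity: each non-mutation contact simply increments $\cI_k(t)$ for the source's strain, and each mutation contact opens a new strain. The two processes then agree on all of $[0,T]$ unless some E-SIS contact is \emph{wasted}, meaning a non-mutation contact whose target is currently infective or already carries a strain of index at least that of the source.

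The key quantitative estimate is that on $\{T\leq t\}$ the E-BDP experiences finitely many events $M_t<\infty$ almost surely, and at each non-mutation event the probability of a coupling failure is at most $M_t/N$, since the target is uniform on $N$ individuals and only at most $M_t$ of them have ever been touched by the coupled processes. A union bound gives $\PP(T^{\theta,N}\neq T\mid T\leq t, M_t=m)\leq m^2/N$, so taking expectation yields $\PP(T^{\theta,N}\neq T,\,T\leq t)\to 0$ as $N\to\infty$. Given $\eps>0$, choosing $t$ so that $\PP(T<\infty)-\PP(T\leq t)<\eps/2$ and then $N$ large enough establishes convergence in probability, and hence in distribution, of $T^{\theta,N}$ to $T$ restricted to $\{T<\infty\}$. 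In the subcritical regime $\PP(T<\infty)=1$ this is genuine weak convergence on $[0,\infty)$; in the supercritical regime it gives convergence on the stated event. The main obstacle I foresee is controlling the integrability of $M_t$, in particular in the supercritical case where the unconditioned BDP is transient. A clean fix is the truncation $M_t\wedge K$ together with the elementary bound $\PP(M_t>K,\,T\leq t)\to 0$ as $K\to\infty$, which reduces the estimate to the standard linear BDP moment bounds on $\{T\leq t\}$.
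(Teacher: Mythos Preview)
Your argument is sound but takes a genuinely different route from the paper's. The paper does not couple the E-SIS directly to the E-BDP at all. Instead it reuses the couplings of Theorems~\ref{thm: tt 1} and~\ref{thm: tt 0} to obtain the almost-sure sandwich $T^{0,N}\leq T^{\theta,N}\leq T^{1,N}$, where $T^{0,N}$ and $T^{1,N}$ are the extinction times of the standard SIR and SIS models on $N$ individuals. It then cites known large-population limits for those two endpoints---Barbour~\cite{barbour1975} for the SIR model and Andersson~\cite{andersson1998} for the SIS model---both of which converge in distribution to the extinction time $T$ of the linear birth--death process with rates $\beta,\gamma$, and concludes by squeezing. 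The supercritical case is handled identically on the event $\{T<\infty\}$.

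Your approach is the classical branching-process approximation via a ``birthday'' estimate: the epidemic and the limiting branching process agree until some individual is contacted twice, and on $\{T<\infty\}$ the number of contacts is a.s.\ finite, so the coupling succeeds with probability tending to one. This is more self-contained, as it does not defer to external limit theorems for SIR and SIS, and it makes the mechanism transparent. The paper's route, on the other hand, exploits the monotonicity of $T^{\theta,N}$ in $\theta$ already established in the preceding two theorems, which is an elegant structural shortcut but pushes the analytic content into the cited literature. One small point worth tightening in your construction: under the pairwise Poisson processes of rate $\beta/N$, each infective generates contacts at total rate $\beta(N-1)/N$ rather than $\beta$, so the branching process you build from them has a slightly perturbed birth rate; this is harmless in the limit but should be acknowledged or corrected by an auxiliary rate-$\beta/N$ stream.
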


\begin{proof}
	Using Theorems \ref{thm: tt 1} and \ref{thm: tt 0} we can conclude that for any fixed $N$ that $T^{0,N}$ is the time to extinction for the standard SIR model, and $T^{1,N}$ is equal to the time to extinction for the standard SIS epidemic model. Furthermore, from these theorems we can construct a coupling of the SIS, E-SIS and SIR models using two sets of Poisson processes and mutation indicator variables such that, for any $\theta \in [0,1]$,
\begin{equation}\label{bounds}
		T^{0,N}(\omega) \leq T^{\theta,N}(\omega) \leq T^{1,N}(\omega)
\end{equation}
for almost every $\omega\in E$. From \cite{barbour1975}, we know that if $\beta < \gamma$ then $T^{0,N}$ converges in distribution to $T$, the time to extinction of a Linear BDP with the same parameters $\beta$ and $\gamma$. From \cite{andersson1998} we obtain that the same thing happens for SIS models, i.e. $T^{1,N} \rightarrow T$ in distribution as $N \rightarrow \infty$. Using the bounds in Equation \eqref{bounds}, we obtain that $T^{\theta,N} \rightarrow T$ as $N \rightarrow \infty$ for all $\theta\in[0,1]$. 
	
	In the case where $\beta \geq \gamma$ we note that on a set of probability $1 - \gamma/\beta$, the time to extinction of the linear BDP is infinite, as discussed in Chapter 3.2 of \cite{Anderson1991}. From \cite{andersson1998}, we know that $T^{1,N} \rightarrow T$ almost surely (and hence in distribution) on the event $\{T < \infty \}$. From \cite{barbour1975} we know that on this event, $T^{0,N} \rightarrow T$ in distribution. Therefore we must have that $T^{\theta,N} \rightarrow T$ as $N \rightarrow \infty$ for all $\theta\in[0,1]$ here too. 
\end{proof}
It should be noted, that on the event $\{T = \infty\}$ we don't have $T^{0,N} \rightarrow \infty$. Instead $T^{0,N}$ converges to an extreme-value distribution as mentioned in Theorem 8.1 \cite{Andersson2000}.

Next we show existence of a QSD for the E-BDP model.

\begin{mythm}\label{thm: eu bdps}
	Let $\bX(t)$ be the E-BDP with parameters $\gamma > \beta > 0$ and $\theta \in [0,1]$. Then, under the equivalence relation described in Section \ref{sec: equiv} and conditional on the event $\{I(t) > 0\}$, there exists a unique QSD.
\end{mythm}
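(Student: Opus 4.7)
The plan is to extend the irreducibility-plus-decay-parameter framework used in Theorems \ref{thm: eu SIStr} and \ref{thm: eu SIRStr} to the countably infinite state space of the normalised E-BDP, and then invoke a uniqueness criterion from the theory of QSDs on countable state spaces cited in \cite{vanDoorn2013}. For $\theta \in (0,1]$, a non-absorbing normalised state is a finite composition $(\ri_{1-K},\ldots,\ri_0)$ with each $\ri_k \geq 1$; there are $2^{n-1}$ such states of total size $n$, so $S = \Omega \setminus \{\bzero\}$ is countably infinite rather than finite as in the finite-population analogues.

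First I would show that $S$ is a single communicating class. Repeated recoveries (with the relabelling step that deletes any strain whose population reaches zero) reduce any state to the minimal state $\ri_0 = 1$, and from there an arbitrary target composition can be built via a prescribed sequence of non-mutation infections (incrementing some $\ri_k$) and mutation infections (inserting a fresh leading strain with $\ri_0 = 1$). Irreducibility then forces any QSD to give positive weight to every non-absorbing state. Next I would establish positivity of the decay parameter by domination: the projection $|\bI(t)|$ is a subcritical linear birth-death process with per-capita birth rate $\beta$ and death rate $\gamma > \beta$, whose extinction time has an exponential tail with rate $\gamma - \beta$. Since the normalised process is absorbed precisely when $|\bI(t)| = 0$, its extinction time inherits this tail, giving $\alpha \geq \gamma - \beta > 0$ and meeting the necessary condition of \cite{vanDoorn2013}. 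The boundary case $\theta = 0$ is handled separately as in Theorem \ref{thm: eu SIRStr}: with no mutation the number of active strains is non-increasing, so starting from a single infective the process collapses to a standard subcritical linear BDP whose Yaglom distribution is the unique QSD.

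The main obstacle is uniqueness on the infinite state space, since subcritical birth-death-type processes typically admit a continuum of QSDs parametrised by their decay rates. To rule this out I would show $\alpha$-positive recurrence of the normalised process and invoke the corresponding uniqueness statement in \cite{vanDoorn2013}, exploiting the branching structure of the E-BDP: each strain, once born, evolves as an independent subcritical linear BDP with rates $\beta(1-\theta)$ and $\gamma$, while new strains are born at per-infective rate $\beta\theta$. The Yaglom limit for $|\bI(t)|$ is the explicit geometric distribution on $\{1,2,\ldots\}$, and conditional on $|\bI| = n$ the composition lives on a finite state space with a unique quasi-stationary structure; combining these two pieces back to the joint normalised process should identify a unique QSD and complete the proof.
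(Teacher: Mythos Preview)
Your existence argument is essentially the paper's: countability of the normalised state space, irreducibility of $S$ for $\theta\in(0,1]$, and positivity of the decay parameter via the projection $|\bI(t)|$ onto a subcritical linear birth--death process with rate $\gamma-\beta$. The paper phrases the last step through the absorption parameter $\alpha_0$ and Theorem~13 of \cite{vanDoorn2013}, but the content is identical.

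Where you diverge is uniqueness, and here there is a genuine gap. Your proposed route---establish $\alpha$-positive recurrence by disintegrating into the Yaglom limit of $|\bI|$ and a ``conditional composition on a finite set''---is not a proof as it stands. The composition conditional on $|\bI(t)|=n$ is not itself a Markov process with its own quasi-stationary structure (mutations and within-strain events change the composition without changing $n$), and there is no general mechanism for splicing a marginal Yaglom limit together with some conditional piece to obtain, let alone characterise uniquely, a QSD for the joint process. You are right that subcritical branching-type chains on countable spaces generically carry a continuum of QSDs, so something specific to this process must be exploited; your sketch does not supply it.

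The paper's argument avoids all of this by a structural observation you miss: under the equivalence relation, absorption can occur from \emph{exactly one} state, namely the state with a single infective (necessarily of strain $0$). This single-exit property, fed into Theorem~3.3.2 of \cite{jacka1995}, pins the decay parameter at $\alpha_X=\gamma-\beta$ and yields uniqueness directly, with no recurrence analysis and no decomposition of the process. That is the missing idea.
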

\begin{proof}
	To prove existence, we first show that the state space we are interested in is countable. To do this we use the following construction. Starting with a single infective of strain 0, we can define a method of constructing the state space. By having a birth in strain 0, or a mutation event, one can systematically arrive at any state in the state space. Given these two possible events, one can encode each state according to a finite binary sequence, which corresponds to a unique integer which we can use to enumerate the space. Given that there exists a lower bound $l \in \ZZ$ such that $I_k = 0$ for all $k \leq l$, we construct the state as follows. Starting with the lowest non-zero strain index $l+1$  consider $I_{l+1}$ strain 0 within-strain-infection events. Then for each higher strain $k$, we choose a mutation event followed by $I_k - 1$ within-strain infection events. 
	Note that only considering finite sequences gives countability, unlike the uncountability of the infinite paths on this binary tree.

	To obtain existence of a QSD, we now introduce a coupling. Let $\bX(t) = (X_j(t))_{j \in \ZZ}$ be the E-BDP. Let $\alpha^X$ be the decay parameter for $\bX(t)$. Let $(Y(t))_{t \geq 0}$ be the process defined on the same probability space, given by $Y(t) = \sum_{j \in Z} X_j(t)$. Since the mutations do not affect whether infections are successful or not, $Y(t)$ can be seen to be a single-strain linear BDP with birth rate $\beta$, and death rate $\gamma$. As discussed in Example 1 of \cite{vanDoorn1991}, $Y(t)$ has the decay parameter $\alpha^Y = \gamma - \beta$. Let $T_X$ be the extinction time of $\bX(t)$ and $T_Y$ for $Y(t)$.
	
	Letting $\alpha^Y_0$ be the absorption parameter for $Y(t)$, and $\alpha_0^X$ for $\bX(t)$, we also know that $\alpha_0^Y = \gamma - \beta$. Since $T_X = T_Y$ under the coupling, we use the definition of the decay parameter to deduce that $\alpha_0^X = \gamma - \beta$, and hence $\alpha^X \geq \alpha_0^X > 0$. Using Theorem 13 of \cite{vanDoorn2013} we get existence of a QSD.
	Moreover, using Theorem 3.3.2 of \cite{jacka1995}, we must have $\alpha_X = \gamma - \beta$ since there is only one state from which extinction can occur: one must have 1 infective before extinction, which must be of strain $0$ under the equivalence relation. This leads to the uniqueness of the QSD.
\end{proof}

\section{Reproduction numbers}\label{sec: repro}
To characterise the dynamics of the models, we look to a number of key statistics which are related to the commonly used \emph{basic reproduction number}, $R_0$, that illustrates whether or not an epidemic is likely to infect a large proportion of the population. The basic reproduction number is defined as the number of individuals infected by a single typical infective in a large, otherwise susceptible population \cite{Anderson1992}. In the E-SIRS model, we still have $R_0 = \beta/\gamma$. One issue with $R_0$ is that it fails to take into account the likely immunities present in the population, or how much the pathogen evolves during the opening phase of the epidemic. 

\subsection{Modified household reproduction number $R_*$}\label{subsec: rstar}
In \cite{Ball1997}, an epidemic is considered which spreads through a population grouped into households, such that individuals in the same household make contact at a different rate to individuals in different households. The households reproduction number $R_*$ is shown in \cite[Section 2.3]{Ball1997} to be equal to $R_* = \mu R_H$
where $\mu$ is the expected number of individuals infected in a single household epidemic (including the initial infective), and $R_H$ is the the mean number of contacts an infective individual makes with individuals in other households during a single infectious period. 

For the E-SIRS model, we consider each strain as a ``household'' which has countably many individuals, and mutations are considered contacts between households. In this case $\mu$ is the expected total population of a birth death process with birth rate $\beta(1-\theta)$ and death rate $\gamma$, including the initial infective. One can use the branching property to compute $\EE[Z]=\gamma/(\gamma-\beta(1-\theta))$ and note that the between household reproduction rate $R_H=\beta\theta/\gamma$ and so,
\begin{align*}
R_* = \begin{cases}
\frac{\beta\theta}{\gamma - \beta(1-\theta)} & \beta(1-\theta) < \gamma \\
\infty & \beta(1-\theta) \geq \gamma
\end{cases}
\end{align*}

To recontextualise this in terms of strains and mutations, one can think of $R_H$ as the expected number of new strains originating from a single individual during one infectious period, and $\mu$ as the expected number of individuals that ever get infected by a specific strain.

One could consider $R_0$ to be the ``intra-strain'' reproduction number, and $\mu$ to be the ``inter-strain'' reproduction number. With these we obtain one of three regimes:
\begin{itemize}
	\item If $R_0 = \beta/\gamma < 1$, then the whole population would die out with certainty, and no large epidemic would occur.
	\item If $R_0 \geq 1$ and $\mu < \infty$ then a large epidemic occurs with positive probability, but each individual strain dies out quickly.
	\item If $R_0 \geq 1$ and $\mu = \infty$, then each strain has a positive probability of producing a large outbreak.
\end{itemize}
Figure \ref{fig: tree} shows realisations of the genetic trees under the E-SIS model under the two supercritical regimes. For small $\theta$, we obtain only a small number of strains, and the epidemic is more likely to die out. Moreover, in a finite population, this low $\theta$ leads to high immunity in the population and hence shorter epidemics. The trees highlight how only a small number of the strains survive for a long time, particularly in Fig. \ref{fig: tree}(a). Figure \ref{fig: tree} also show similarities to the tree for H3N2 in Extended Data Figure 9(c) of \cite{bedford2015}, a paper which specifically looks to model influenza.

\begin{figure}[h!]
	\centering
 		\includegraphics[width=1\textwidth]{./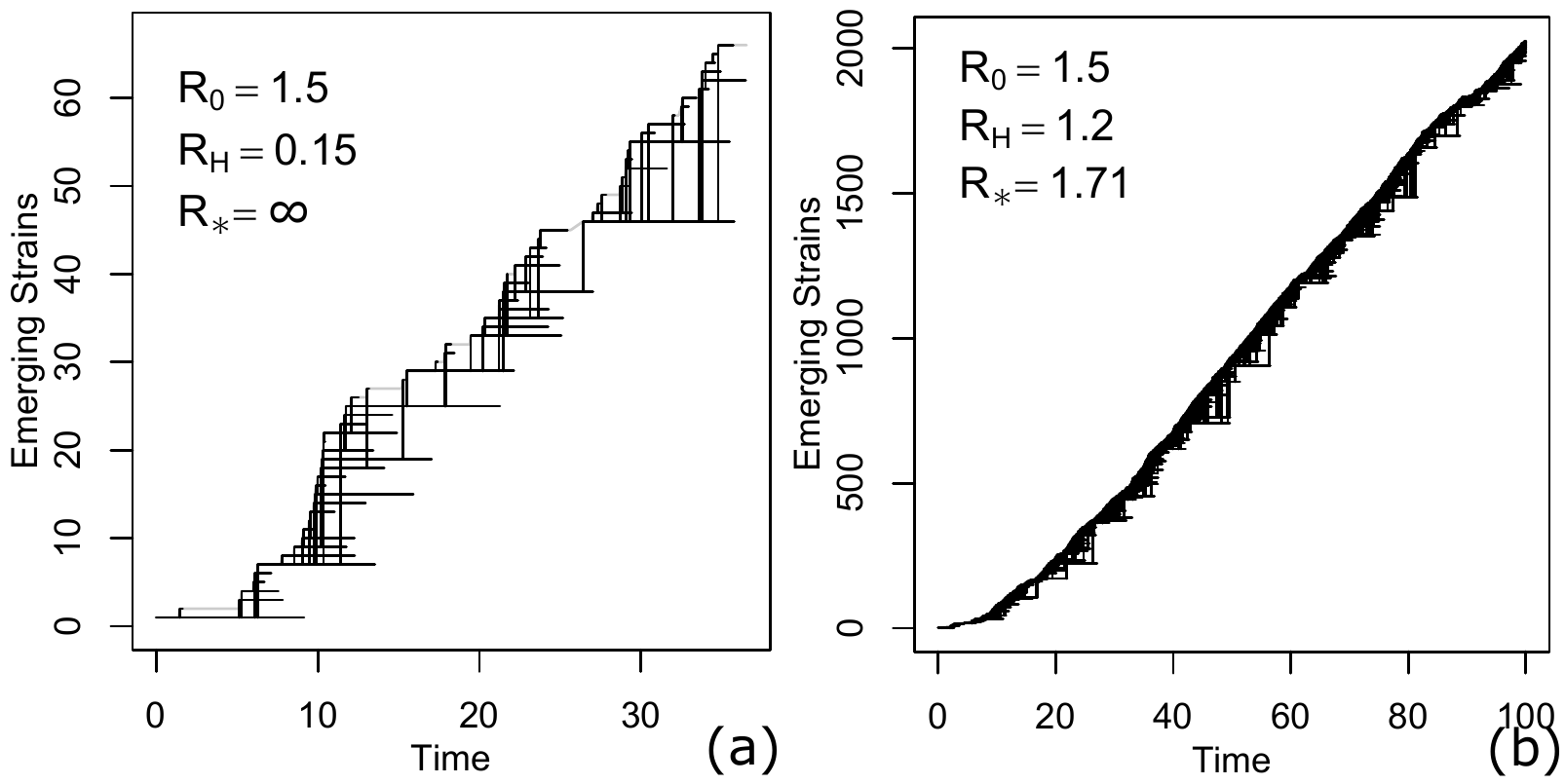}
	\caption{Comparisons of emergence of strains under different $R_0$, $R_H$ under the E-SIS model with $N=25$, $\gamma = 1$, $\beta = 1.5$ with (a) $\theta=0.1$; (b) $\theta=0.8$.}
	\label{fig: tree}
\end{figure}

\subsection{Quasi-stationary reproduction number $R_Q$}\label{subsec: rt}
One drawback to the $R_0$ is that it only usefully describes the initial behaviour of an epidemic in a naive population and doesn't take into account the build up of immunity in the E-SIRS model. One alternative is to consider the \emph{effective reproduction number}, denoted $R_t$, defined as $R_t = R_0 \frac{S(t)}{N}$ in a population of size $N$. Much work has been done in trying to evaluate $R_t$ for specific infections such as influenza by \cite{cowling2010} and Ebola by \cite{althaus2014}. However, $R_t$ is time-dependent and can therefore be difficult to compute and interpret. At a quasi-stable equilibrium the number of new infections balances the recoveries and so $R_t\approx 1$, and hence $R_t$ is not informative about the disease characteristics. Ideally, we would like a reproduction number that adjusts for the build-up of immunity in the population, but remains informative about the infectivity of a disease. 

We offer an alternative reproduction number, based on the QSD, which aims to describe the infectiousness of strains of an endemic disease in a population with `average' levels of historical immunity. The quasi-stationary reproduction number ($R_Q$) is the average number of secondary infections caused by a single typical infective introduced into an otherwise uninfected (S status) population with levels of immunity (strain indexes) drawn from the quasi-stationary distribution, so each other individual may or may not be immune to the current strain of the infective. By typical infective, we mean an individual with strain index sampled from the distribution of strain indexes of infectives in the QSD. Under the E-SIRS model, the total number of infectives is always less than the SIS model without evolving strains, and so $R_Q\leq R_0$.

The quasi-stationary reproduction number provides a measure of the ability of a pathogen to re-invade a population from which it has been eradicated. For diseases like seasonal influenza which have greatly reduced incidence during the summer months, $R_Q$ measures the reproduction number at the beginning of the next influenza season after accounting for the residual immunity left over from last year.

More precisely, we draw the single infective from the marginal number of infectives in the QSD $u_I(k)$: the probability that given an individual is infective, it is of strain index $k$. For QSD $\bu$ this is given by 
\[
	u_I(k) = \sum_{(\bs,\bi,\br) \in \Omega} u_{(\bs,\bi,\br)} \frac{{\rm i}_k }{ |\bi| }.
\]
Under the equivalence relation described in Section \ref{sec: equiv}, we can have a maximum of $N$ strains in a population of size $N$, and so the strain index ranges over $k \in \KK^* = \{0, 1-N\}$. The susceptible population is drawn from the total strain marginals of the QSD $u_K(k)$: the probability that under the QSD that a given individual is of strain index $k$. 
\[
	u_K(k) = \sum_{(\bs,\bi,\br) \in \Omega} u_{(\bs,\bi,\br)} \frac{\textrm{i}_k + \textrm{r}_k + \textrm{s}_k}{N} 
\]
Finally, we require the probability that a randomly chosen individual drawn from the strain marginal will be susceptible to strain $k$ (i.e.\ will have a strain index lower than $k$):
\[u_L(k) = \sum_{j=1-N}^{k-1}u_K(j).\] 
During their infectious period the infective makes infectious contact with each individual at the points of a Poisson process with rate $\beta/N$. For large populations the infective is unlikely to contact the same individual twice (or themselves), and so the expected number of contacts is $\beta/\gamma$. With probability $\theta$ the contacts are mutations and are successful infections. With probability $(1-\theta)$ the contacts are non-mutations and are only successful if the individual contacted has a lower strain index, which occurs with probability $u_L(k)$ when the infective has strain index $k$. To calculate $R_Q$ we condition on the strain index of the initial infective, hence
\begin{align}\label{eqn: rq}
R_Q&=\frac{\beta}{\gamma}\sum\limits_{k=1-N}^{0}(\theta + (1-\theta)u_L(k))u_I(k)\nonumber\\
&=\frac{\beta}{\gamma}(\theta+(1-\theta)\bu_L^T\bu_I).
\end{align}  
Since $u_I$ and $u_L$ are both probability mass functions, $0\leq\bu_L^T\bu_I\leq1$ and so we have that $R_H\leq R_Q\leq R_0$. As $\theta \rightarrow 1$ then $R_Q\rightarrow\beta/\gamma=R_0$, as does $R_H$. 

The three notions of a reproduction number in this section describe three different facets of the epidemic model, and can be compared in Figure \ref{fig: ro comp}. It shows that $R_Q$ is always less than $R_0$ due to the effects of immunity, and $R_*$ depends greatly on $\theta$; the unplotted points for $R_*$ are in the regions where it is infinite, namely where $\beta(1-\theta) \geq \gamma$. Values of $R_Q$ were calculated using the SMC sampler described in Section \ref{sampling} with $M=100$ particles and a resampling threshold of $\lambda=0.4$.

\begin{figure}
	\centering
 		\includegraphics[width=1\textwidth]{./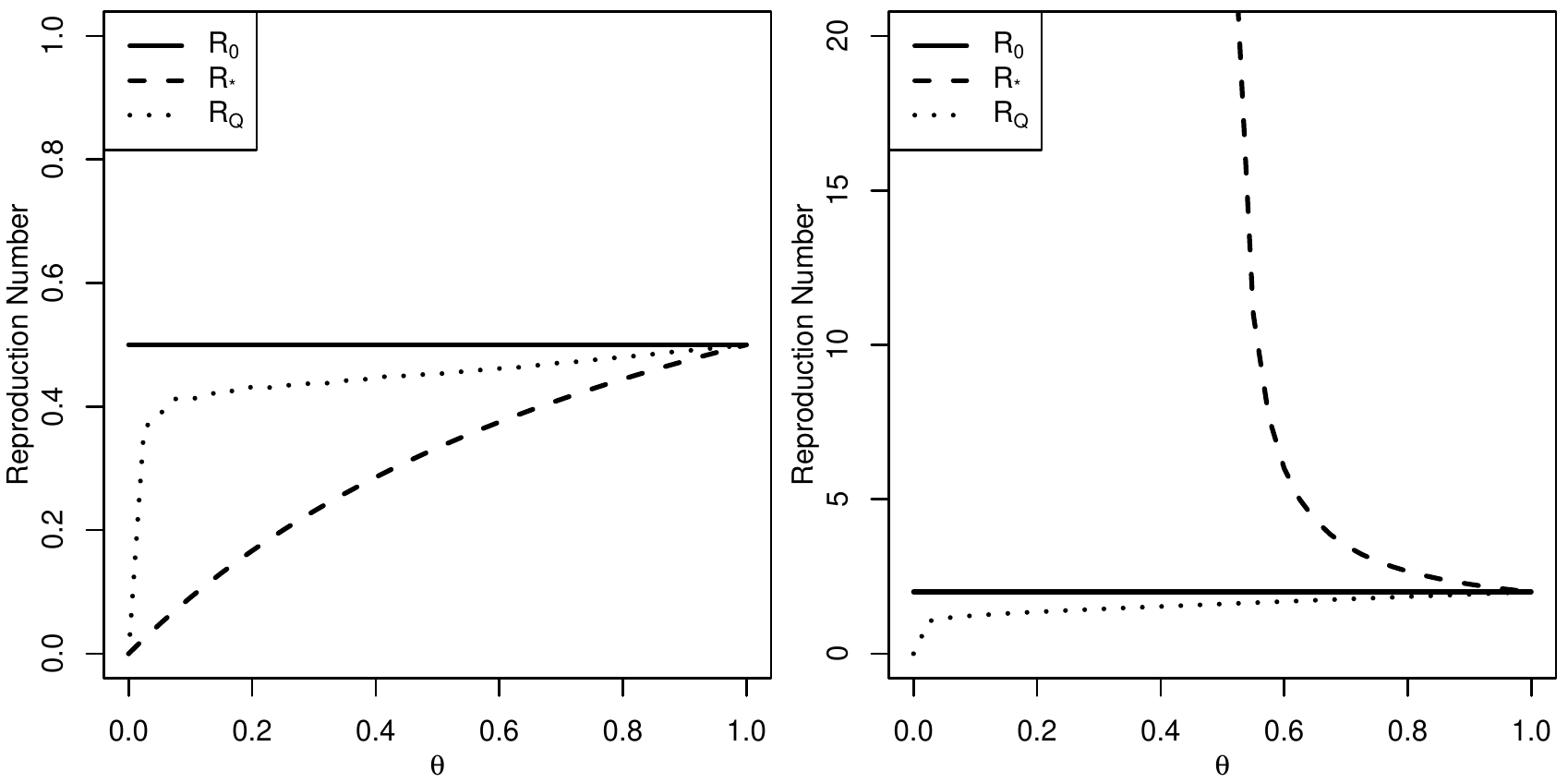}
	\caption{Comparisons of $R_0$, $R_*$ and $R_Q$ under varying $\theta$. with (a) $\beta = 0.5$, $\gamma = 1$, $N = 100$; (b) $\beta=2$, $\gamma=1$, $N=100$.}
	\label{fig: ro comp}
\end{figure}

\section{Simulation Study}\label{sec: sim study}
To further explore the E-SIRS model we use the SMC sampler described in Section \ref{sampling} to investigate numerically features of the QSD which we cannot obtain analytically. We wish to observe how various key properties behave as we vary parameters of the model. To this end we look at the following expectations over the QSD. For brevity we omit time indices and conditioning, and denote expectations under the QSD by $\EE_Q$.
\begin{itemize}
	\item The expected total number of infectives $\EE_Q[I]$ and immune individuals $\EE_Q[R]$ in the QSD, where $I = \sum_{k=-\infty}^0 I_k$, $R = \sum_{k=-\infty}^0 R_k$. 
	\item The expected total number of active strains $\EE_Q[K]$ in the QSD where $K = |\{k : I_k > 0\}|$. 
	\item We also look at how varying the model parameters affects strain diversity in infectives and the whole population.
\end{itemize}
We will focus on the E-SIS model, but also discuss for each statistic how the addition of an immune period, as in the E-SIRS models, changes the number of infectives and strain diversity. Unless otherwise stated, all expectations over the QSD were produced with the SMC sampler described in \ref{sampling} with $M=100$ particles and resampling threshold $\lambda=0.4$.

\subsection{Expected number of infectives}
Figure \ref{fig: bt sis} shows a heatmap of the expected number of infectives in the population under quasi-stationarity, $\EE_Q[I]$, and how this depends on the contact rate and the probability of mutation. Increasing the contact rate $\beta$ or mutation probability $\theta$ increases the expected number of infectives. However, for a fixed population size (in this case $N=100$), the number of infectives increases linearly in $\beta$ when $\EE_Q[I]$ is much smaller than $N$. This can be observed in Figure \ref{fig: b sis}(a), which shows that the number of infectives grows more slowly as $\beta$ increases, especially when $\theta$ is small and so the probability of failed infections is high. As was also noted in Section \ref{sampling}, increased levels of global immunity result in fewer infectives under quasi-stationarity, due to the increased possibility of failed infections.

\begin{figure}
	\centering
	\includegraphics[width=0.7\textwidth]{./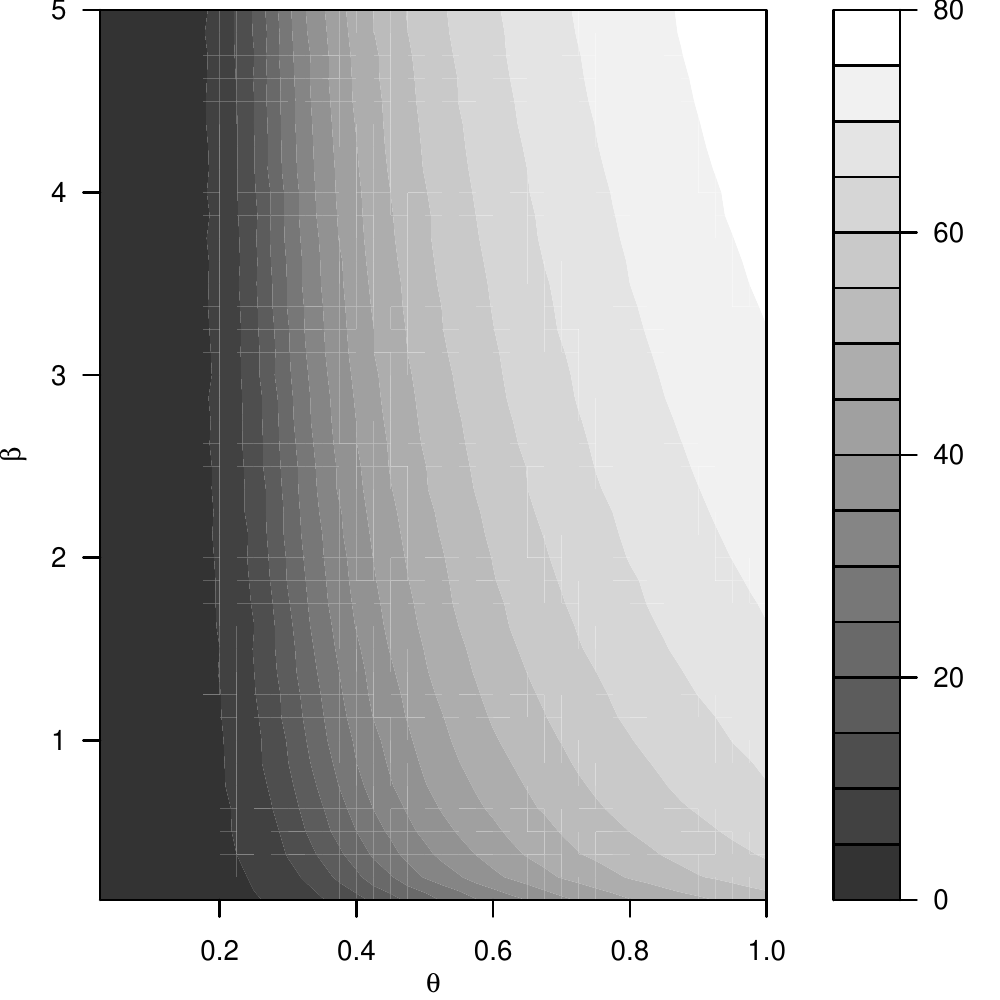}
	\caption[Infectives as $\beta, \theta$ vary in the E-SIS Model.]{Expected number of infectives as $\beta$ and $\theta$ change in E-SIS with $\gamma = 1$, $N=100$.
}
	\label{fig: bt sis}
\end{figure}

Figure \ref{fig: b sis}(b) shows that as $N$ increases the expected proportion of infectives ($\EE_Q[I]/N$) decreases in the case where $\beta < \gamma$, whereas in the supercritical case we see that $\EE_Q[I]/N$ remains fairly constant. In the E-SIRS model, we see that $\EE_Q[I]/N$ is decreased by the introduction of transient global immunity. Furthermore, as $\delta$ gets smaller the transient immunity lasts longer and $\EE_Q[I]/N$ further decreases. 

\begin{figure}
	\centering
 		\mbox{\includegraphics[width=0.5\textwidth]{./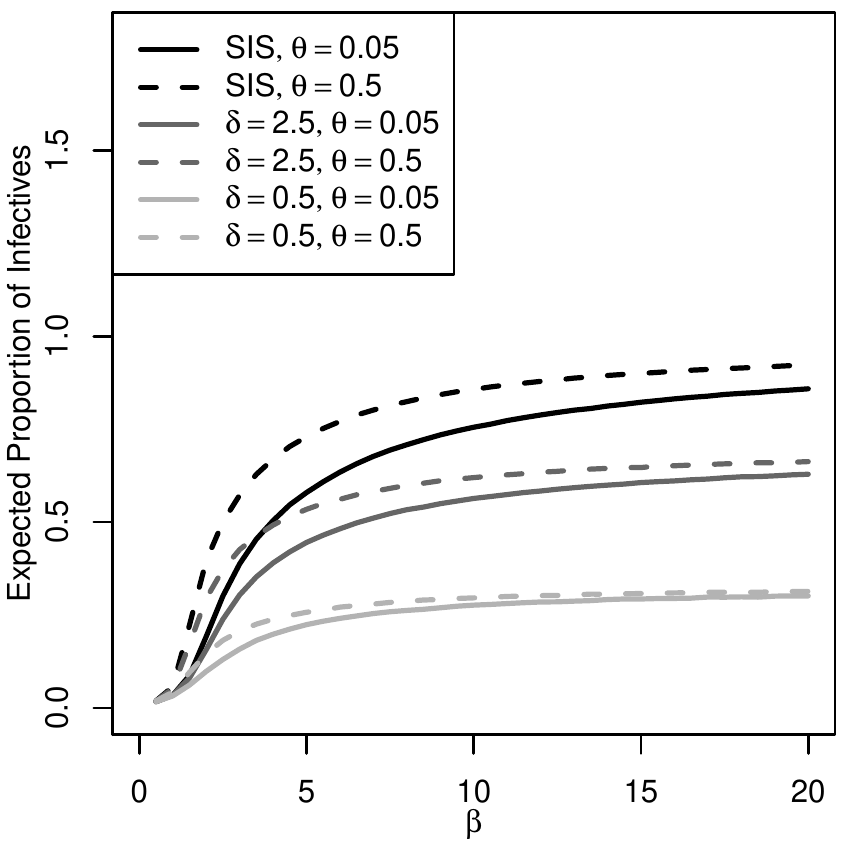}
 		\includegraphics[width=0.5\textwidth]{./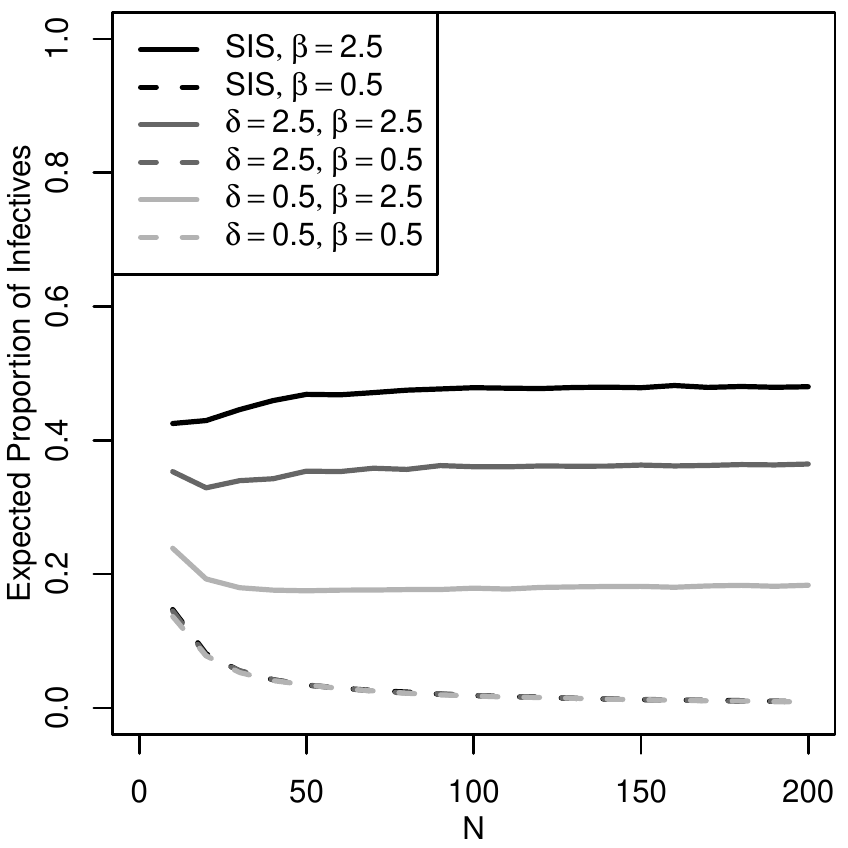}}
	\caption{Expected proportion of infectives in the E-SIS and E-SIRS models: (a) as $\beta$ varies with $\gamma=1, N=100$; (b) as $N$ varies with $\gamma = 1$, $\theta=0.4$.}
	\label{fig: b sis}
\end{figure}

\subsection{Expected number of strains}\label{subsec: strains}
We investigated what happens to the expected number of active strains, $\EE_Q[K]$ (strains held by infectives) as the parameters change. Under our models, the number of strains is always less than the number of infectives due to the absence of super-infectivity (infection of an individual by multiple strains during a single infectious period). As such, much of the behaviour is similar to that of the expected number of infectives in the previous subsection. For example, the expected number of strains increases linearly with $\beta$ when $\EE_Q[K]$ is much less than $N$. This follows since we already know that for $\theta = 1$ every infective begins a new strain and so $\EE_Q[I] = \EE_Q[K]$. At the other end of the scale, we automatically have that $\EE_Q[K] = 1$ if $\theta = 0$.

Figure \ref{fig: fix bt} shows the expected number of strains for fixed $\beta\theta$ (mutation contact rate) and $\beta(1-\theta)$ (non-mutation contact rate), as $\beta$ and $\theta$ vary.  Note that for Figure \ref{fig: fix bt}(b), both $\theta$ and $\beta$ increase from left to right, whereas, to maintain fixed $\beta\theta$, $\beta$ decreases as $\theta$ increases. In the case when $\beta\theta$ is high, one might expect $\EE_Q[I]$ and $\EE_Q[K]$ to be closer in value since there is a high probability of mutation leading to a high number of co-circulating strains. This is demonstrated in Figure \ref{fig: fix bt}(a), where we see that for fixed $\beta\theta$, the number of strains is larger (and therefore closer to the number of infectives) for the $\beta\theta = 2$ line than for the $\beta\theta = 0.05$ line. In Figure \ref{fig: fix bt}(a) there is a maximum point for the number of strains as $\beta$ increases, after which the number of strains decreases. As $\beta(1-\theta)$ increases in Figure \ref{fig: fix bt}(b), the number of strains becomes more linear in $\theta$. 


\begin{figure}[h!]
	\centering
  		\includegraphics{./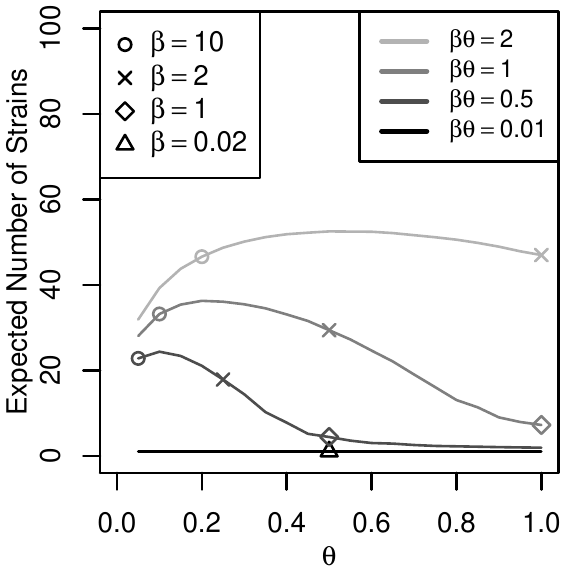}
  		\includegraphics{./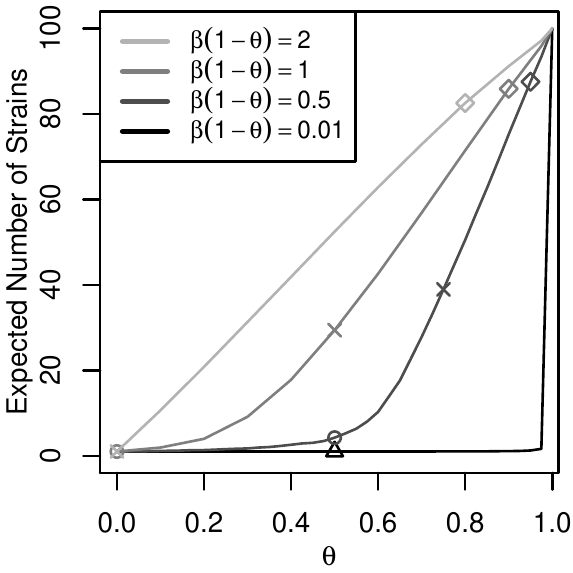}
	\caption{Expected number of strains under quasi-stationarity for (a) fixed $\beta\theta$; (b) fixed $\beta(1-\theta)$; with $\gamma = 1$ and $N=100$. 
	}
	\label{fig: fix bt}
\end{figure}

\subsection{Strain diversity}

In Figure \ref{fig: div} we investigate the distribution of immunity across the active strains.  The figure shows the expected proportion of infectives $\EE_Q[I/N]$ and total individuals for each strain index $\EE_Q[I_k + S_k + R_k]$, relative to the most recently emerged strain. The expectations taken over the quasi-stationary distribution were calculated using the SMC sampler described in Section \ref{sampling}, with $M=100$ particles, resampling threshold $\lambda=0.4$ and burn-in $T_\text{b}=11$.

\begin{figure}
	\centering
 		\mbox{\includegraphics[width=0.5\textwidth]{./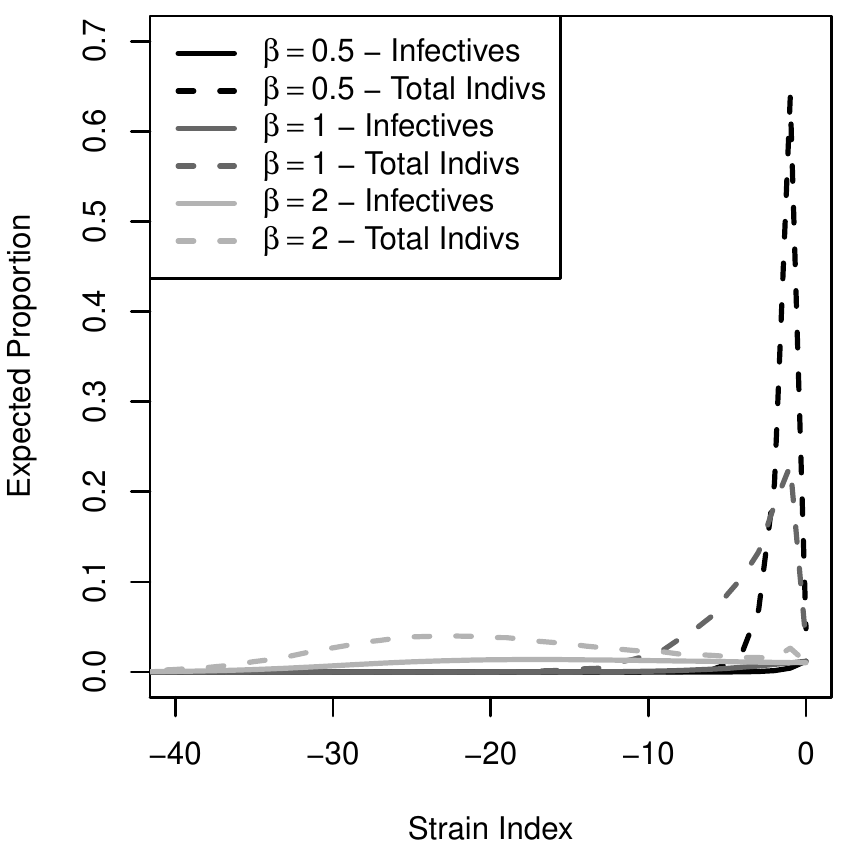}
  		\includegraphics[width=0.5\textwidth]{./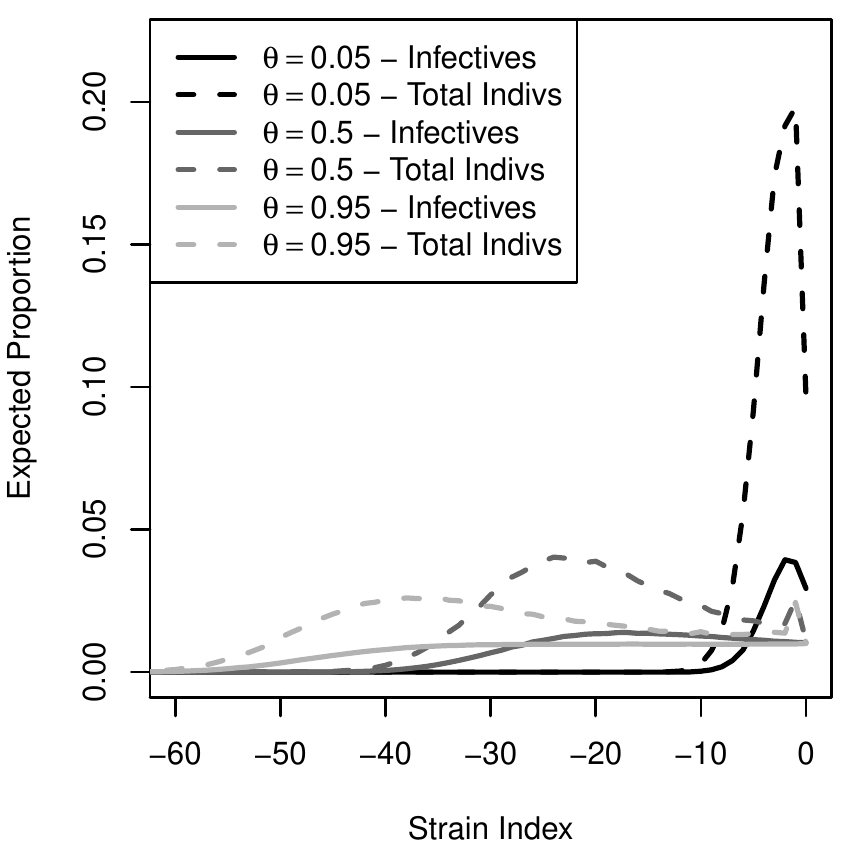}}
  		\mbox{\includegraphics[width=0.5\textwidth]{./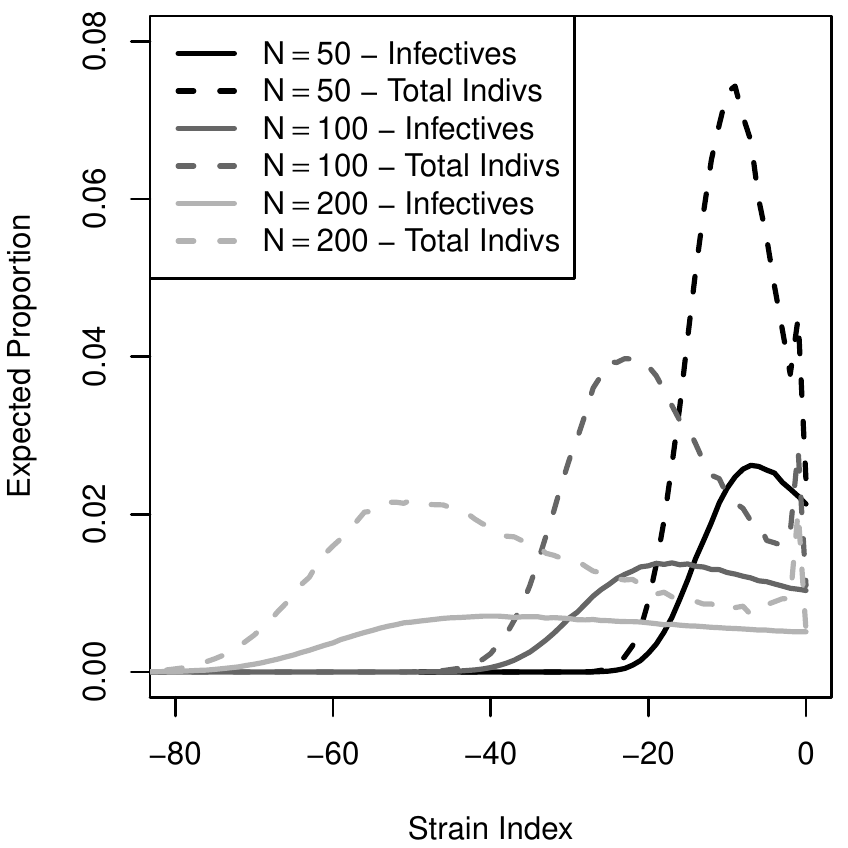}
  		\includegraphics[width=0.5\textwidth]{./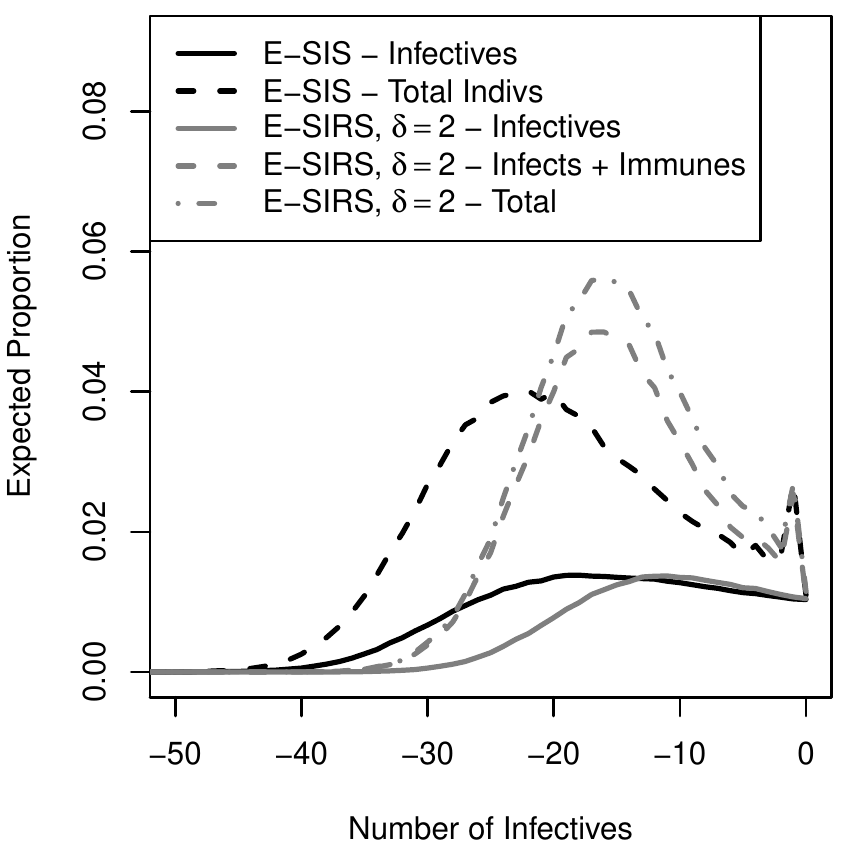}}
	\caption{Strain diversity as a function of the parameters in the E-SIS and E-SIRS models: (a) $\beta$ varies; (b) $\theta$ varies; (c) $N$ varies; (d) $\delta$ varies. Unless otherwise stated $\beta=2$, $\gamma=1$, $N=100$, $\theta=0.5$.}
	\label{fig: div}
\end{figure}

Figure \ref{fig: div}(a) illustrates that larger values of $\beta$ greatly increase the diversity of strains in the infectives and the variation in immunity in the population, since there are more infectives and so more chances for mutation contacts. Another point of interest is the lag of the strain diversity: the number of strains between the mode of the infective strains and the mode of the total population. The lag is fairly consistent for the different values of $\beta$, but does increase slowly in $\beta$. Figure \ref{fig: div}(b) shows the change in strain diversity in $\theta$. As $\theta$ increases, the number of strains present increases, so the strain diversity curve flattens out. For high values of $\theta$, a larger lag is observed between the infectives and the whole population, due to the higher diversity. In Figure \ref{fig: div}(c), the effect of population size is explored. As $N$ increases, we observe a wider number of strains, as one would expect given $\EE_Q[K]$'s behaviour. However, unlike the behaviour as $\beta$ changes, the peak moves away from 0 but the lag between the infectives and the rest of the population appears more consistent. For the E-SIRS model explored in Figure \ref{fig: div}(d), the immune period reduces strain diversity by reducing the expected number of infectives.

In applications, one might wish to look further into the lag between the strain distribution of the infectives and the immunity in the population. If a pathogen has a long lag, then vaccination can be effective in updating the immunity present in the population. However, if the lag is short, then a vaccine based on a recent strain will have little effect in increasing the levels of immunity in the population, as the most immunity profiles in the population will already represent the currently circulating pathogen.

\section{Conclusions}
In this paper we defined an epidemic model for a pathogen undergoing genetic drift, that lies between the well-studied SIR and SIS epidemic models. The model appears to capture some qualitative aspects of the evolution of strains of influenza A, despite depending on just 4 parameters. Compared to models used by \cite{bedford2015} and \cite{parisi2013}, which require the storage of a whole antigenic history, our model is much simpler, which makes simulation, computation and inference much easier. Despite these simplifications, the simulated genetic trees in Figure \ref{fig: tree} show similarities to the tree for H3N2 in Extended Data Figure 9(c) of \cite{bedford2015}. 
The relative simplicity of our model enables analytical insights into model behaviour, such as the relationship between our models and the SIS and SIR models discussed in Theorems  \ref{thm: tt 0} and \ref{thm: tt 1}. A simulation study showed that there is a nonlinear tradeoff between mutation and infectivity when trying to estimate the number of co-circulating strains under quasistationarity. The development of a quasistationary reproduction number $R_Q$ also allows summary the expected behaviour of an epidemic under quasistationarity, by comparing it to a household epidemic model.
Clearly this work could be reduced to a finite state space of strains, but also include more complex strain evolution models, accounting for similarity of strains conferring some amount of partial immunity.

\section*{Acknowledgements}
Funding: AG was supported by EPSRC Grant Number EP/HO23364/1; GOR and SEFS were supported by EPSRC grant EP/R018561/1; SEFS was supported by MRC grant MR/P026400/1. 

\bibliography{biblio}

\end{document}